\newtheorem{theorem}{Theorem}
\newtheorem{claim}[theorem]{Claim}
\newtheorem{definition}[theorem]{Definition}
\newtheorem{lemma}[theorem]{Lemma}
\newtheorem*{claim*}{Claim}
\newtheorem{observation}[theorem]{Observation}
\title{Fault-tolerant $k$-Supplier with Outliers}
\author{
Deeparnab Chakrabarty\thanks{Dartmouth College. Emails: \texttt{\{deeparnab, luc.cote.th, ankita.sarkar.gr\}@dartmouth.edu}. DC and AS were partly supported by NSF Award \#2041920.}
\and
Luc Cote\footnotemark[1]
\and
Ankita Sarkar\footnotemark[1]
}
\date{}
\newcommand{\fkso}{{\sf F}$k${\sf SO}\xspace}
\newcommand{\fksofull}{Fault-tolerant $k$-Supplier with Outliers\xspace}
\newcommand{\ufkso}{{\sf UF}$k${\sf SO}\xspace}
\newcommand{\ufksofull}{Uniformly Fault-tolerant $k$-Supplier with Outliers\xspace}
\newcommand{\kso}{$k${\sf SO}\xspace}
\newcommand{\ksofull}{$k$-Supplier with Outliers\xspace}
\newcommand{\ks}{$k$-Supplier\xspace}
\newcommand{\fks}{{\sf F}$k${\sf S}\xspace}
\newcommand{\fksfull}{Fault-tolerant $k$-Supplier\xspace}
\DeclareMathOperator{\argmax}{\mathrm{argmax}}
\newcommand{\child}{\mathsf{child}}
\newcommand{\cov}{\mathsf{cov}}
\newcommand{\height}{\mathrm{height}}
\newcommand{\opt}{\mathtt{opt}}
\newcommand{\poly}{\mathrm{poly}}
\newcommand{\Roots}{\mathtt{Roots}}
\newcommand{\calC}{\mathcal{C}}
\newcommand{\calF}{\mathcal{F}}
\newcommand{\calP}{\mathcal{P}}
\newcommand{\calT}{\mathcal{T}}
\newcommand{\NN}{\mathbb{N}}
\newcommand\RR{\mathbb{R}}
\newcommand\bone{\mathbf{1}}
\newcommand\bk{\mathbf{k}}
\newcommand{\abs}[1]{\left\lvert#1\right\rvert}
\newcommand{\bracket}[1]{\left[#1\right]}
\newcommand{\ceil}[1]{\left\lceil#1\right\rceil}
\newcommand{\floor}[1]{\left\lfloor#1\right\rfloor}
\newcommand\paren[1]{\left(#1\right)}
\newcommand{\set}[1]{\left\{#1\right\}}
\begin{document}

\maketitle

\begin{abstract}
\pagenumbering{gobble}
We present approximation algorithms for the Fault-tolerant $k$-Supplier with Outliers ({\sf F$k$SO}) problem. This is a common generalization of two known problems -- $k$-Supplier with Outliers, and Fault-tolerant $k$-Supplier -- each of which generalize the well-known $k$-Supplier problem. In the $k$-Supplier problem the goal is to serve $n$ clients $C$, by opening $k$ facilities from a set of possible facilities $F$; the objective function is the farthest that any client must travel to access an open facility. In {\sf F$k$SO}, each client $v$ has a \emph{fault-tolerance} $\ell_v$, and now desires $\ell_v$ facilities to serve it; so each client $v$'s contribution to the objective function is now its distance to the $\ell_v$\textsuperscript{th} closest open facility. Furthermore, we are allowed to choose $m$ clients that we will serve, and only those clients contribute to the objective function, while the remaining $n-m$ are considered outliers.

Our main result is a $(4t-1)$-approximation for the {\sf F$k$SO} problem, where $t$ is the number of distinct values of $\ell_v$ that appear in the instance. At $t=1$, i.e. in the case where the $\ell_v$'s are uniformly some $\ell$, this yields a $3$-approximation, improving upon the $11$-approximation given for the uniform case by Inamdar and Varadarajan [2020], who also introduced the problem. Our result for the uniform case matches tight $3$-approximations that exist for $k$-Supplier, $k$-Supplier with Outliers, and Fault-tolerant $k$-Supplier.

Our key technical contribution is an application of the \emph{round-or-cut} schema to {\sf F$k$SO}. Guided by an LP relaxation, we reduce to a simpler optimization problem, which we can solve to obtain distance bounds for the ``round'' step, and valid inequalities for the ``cut'' step. By varying how we reduce to the simpler problem, we get varying distance bounds -- we include a variant that gives a $(2^t + 1)$-approximation, which is better for $t \in \set{2,3}$. In addition, for $t=1$, we give a more straightforward application of round-or-cut, yielding a $3$-approximation that is much simpler than our general algorithm.

\end{abstract}
\newpage

\section{Introduction}\label{sec:intro}
\pagenumbering{arabic}
Clustering problems form a class of discrete optimization problems that appear in many application areas ranging from operations research~\cite{Lee1981,Mirkin1996,Liu1999} to machine learning~\cite{JainMF1999,AhujaCGAK2020,MahabV2020,JungKL2020}. They also 
have formed a sandbox where numerous algorithmic ideas, especially ideas in approximation algorithms, have arisen and developed over the years. 
One of the first clustering problems to have been studied is the \ks problem~\cite{HochbS1986}: in this problem, one is given a set of points in a metric space $(C\cup F,d)$, where $C$ is the set of ``clients'' and $F$ is the set of ``facilities'', and a number $k$. The objective is to ``open'' a collection $S\subseteq F$ of $k$ centers so as to minimize the maximum distance between a client $v\in V$ to its nearest open center in $S$, that is, minimize $\max_{v\in C} \min_{f\in S} d(f,v)$. It has been known since the mid-80's, due to an influential paper of Hochbaum and Shmoys~\cite{HochbS1986}, that this problem has a $3$-approximation and no better approximation is possible\footnote{Howard Karloff is attributed the hardness result in~\cite{HochbS1986}.}.

One motivation behind the objective function above is that $d(v,S) := \min_{f\in S} d(f,v)$ indicates how (un)desirable the client $v$ perceives the the set of open facilities, and the \ks objective tries to take the egalitarian view of trying to minimize the unhappiest client. However, in certain applications, a client $v$ would perhaps be interested not only in having one open facility in a small neighborhood but a larger number. For instance, the client may be worried about some open facilities closing down. This leads to the {\em fault-tolerant} versions of clustering problems. In this setting, each client $v$ has an integer $\ell_v$ associated with it, and the desirability of a subset $S$ for $v$ is not determined by the nearest facility in $S$, but rather the $\ell_v$\textsuperscript{th} nearest facility. That is, we sort the facilities in $S$ in increasing order of $d(f,v)$ and let $d_{\ell_v}(v,S)$ denote the $\ell_v$\textsuperscript{th} distance in this order
(so $d(v,S) = d_1(v,S)$). The \fksfull (\fks) problem is to find $S\subseteq F$ with $|S| = k$ so as to minimize $\max_{v\in C} d_{\ell_v} (v,S)$.
As far as we know, the \fksfull problem has not been {\em explicitly} studied in the literature\footnote{although the fault-tolerant facility location and $k$-median have been extensively studied~\cite{JainV2004, GuhaMM2003, SwamyS2008, HajiaHLLS2016}; more on this in~\Cref{sec:related-work}.}, however, as we show in~\Cref{sec:prelim}, there is a simple $3$-approximation based on the same scheme developed by Hochbaum and Shmoys~\cite{HochbS1986}. 

One drawback of the \ks objective is that it is extremely sensitive to outliers; since one is trying to minimize the maximum, a single far-away client makes the optimal value large. To allay this, people have considered the ``outlier version'' of the problem, \ksofull (\kso). In \kso, one is given an additional integer parameter $m$, and the goal of the algorithm is to open a subset $S$ of $k$ facilities {\em and} recognize a subset $T\subseteq C$ of $m$-clients, so as to minimize $\max_{v\in T} d(v,S)$. That is, all clients outside $T$ are deemed outliers and one doesn't consider their distance to the solution. 
The outlier version is algorithmically interesting and is not immediately captured by the Hochbaum-Shmoys technique. Nevertheless, in 2001, Charikar, Khuller, Mount, and Narsimhan~\cite{ChariKMN2001} described a combinatorial, greedy-like $3$-approximation for \kso\footnote{A different LP-based approach was taken by Chakrabarty, Goyal, and Krishnaswamy~\cite{ChakrGK2020} and vastly generalized by Chakrabarty and Negahbani~\cite{ChakrN2019}; more on this in~\Cref{sec:related-work,sec:prelim}.}. Since then, outlier versions of many clustering problems have been considered, and it has been a curious feature that the approximability of the outlier version has been of the same order as the approximability of the original version without outliers.

In this paper, as suggested by the title, we study the {\em \fksofull} (\fkso) problem which generalizes \fks and \kso. This problem was explicitly studied only recently by Inamdar and Varadarajan~\cite{InamdV2020}; but that work only studies the {\em uniformly fault-tolerant} case where all $\ell_v$'s are the same (say, $\ell$). The main result of~\cite{InamdV2020} was a ``reduction'' to the ``non-fault-tolerant''  version of the clustering problem with outliers, and their result is that an $\alpha$-approximation for the \kso problem translates\footnote{Actually, they~\cite{InamdV2020} only study the ``$k$-Center'' case when $F=C$, and in that case the result is $(2\alpha +2)$; their proofs do reveal that for the Supplier version, one obtains $(3\alpha + 2)$.} to a $(3\alpha + 2)$-approximation for the \fkso problem with uniform fault-tolerance. Setting $\alpha = 3$ from the aforementioned work~\cite{ChariKMN2001} on \kso, one gets an $11$-approximation for the uniform case of \fkso.

\paragraph*{Our Contributions}

We begin by providing a simple LP-based $3$-approximation for the \fkso problem when the fault-tolerances are uniform, that is, $\ell_v = \ell$ for all $v\in C$. This improves the known $11$-approximation ~\cite{InamdV2020}. Even this special case is interesting in that when the uniform fault-tolerance $\ell$ divides $k$, then the ``natural LP'' suffices to obtain a $3$-approximation using a rounding scheme similar to a prior rounding algorithm for \kso~\cite{ChakrGK2020}. However, when $\ell$ doesn't divide $k$, then we need to add in valid inequalities akin to Chv\'{a}tal-Gomory cuts~\cite{Chvat1973,Gomor1969} in integer programming. Nevertheless, the rounding algorithm is simple and is described in~\Cref{sec:ufkso}.

Our main contribution is to the general \fkso problem, when $\ell_v$'s can be different for different clients. This problem becomes much more complex for the simple reason that if two clients $v$ and $v'$ are located very close together, but $\ell_v < \ell_{v'}$, then opening $\ell_v$ facilities around $v$ would still render $v'$ unhappy -- this does not happen in the uniform case. Therefore, the Hochbaum-Shmoys procedure~\cite{HochbS1986}, or more precisely the LP-guided Hochbaum-Shmoys rounding that is known for \kso~\cite{ChakrGK2020}, simply doesn't apply under non-uniform fault-tolerance. Indeed, the natural LP relaxation and its natural strengthening, which give us the $3$-approximation for the uniform case, has large integrality gaps even when the $\ell_v$'s take only two values; we show this in~\Cref{sec:weak-lp-gap}.

Our main result is a $(4t-1)$-approximation for the \fkso problem when there are $t$ distinct\footnote{We should point to the reader that one can't simply solve $t$ different uniform \fkso versions and ``stick them together'' to get such a result; although it is a natural idea, note that a priori we do not know how many outliers
one will obtain from each fault-tolerant class, and enumerating is infeasible.} $\ell_v$'s (that is, $\abs{\{\ell_v: v\in C\}} = t$). When $t=1$, we recover the $3$-approximation mentioned above. This is not the most desirable result (one would hope a $O(1)$-approximation for any $t$), but as the above integrality gap example illustrates, even when $t=2$, strong LPs have bad integrality gaps. We also use the same schema to give a $(2^t + 1)$-approximation, which gives better approximation factors for $t\in\set{2, 3}$. 

Our main technical contribution is to apply the {\em round-or-cut} schema introduced for clustering problems by Chakrabarty and Negahbani~\cite{ChakrN2019} to \fkso. In particular, this schema uses a fractional solution $\set{\cov_v}_{v \in C}$ which indicates the extent to which each client $v$ is an ``inlier'' (that is, in the final set $T$ of at least $m$ clients). Earlier works~\cite{ChakrGK2020, ChakrN2019} use this fractional solution to guide the Hochbaum-Shmoys-style~\cite{HochbS1986} rounding algorithm, creating a partition on the set of clients and solving a simpler optimization problem on this partition. We also use the same schemata, except that our partitioning scheme is a more general one warranted by the non-uniform fault-tolerances; nevertheless, we show that we either obtain the desired approximation factor (the ``round'' step), or we can prove that the $\cov_v$'s cannot arise as a combination of integral solutions (the ``cut'' step). Once we do this, the round-or-cut schema implies a polynomial time approximation factor. 
We also show that $t$ is the limiting factor in our approach; more precisely, the diameter of the parts of the desired partition dictates the upper bound on the approximation factor, and in \Cref{appsec:partition-gap} we construct an instance such that the diameter needs to be $\Omega(t)$. We leave the possibility of obtaining $O(1)$-approximations for \fkso, or alternatly proving a super-constant hardness, as an intriguing open problem.

\subsection{Related Work}\label{sec:related-work}
The Hochbaum-Shmoys algorithm~\cite{HochbS1986} gives a $3$-approximation for the \ks problem, and has been extended to give approximation algorithms for multiple related problems. Plesn\'{i}k~\cite{Plesn1987} gave one such extension, obtaining a $3$-approximation\footnote{The work~\cite{Plesn1987} studies the $k$-Center case, where $F = C$, and gives a $2$-approximation; but the proofs imply a $3$-approximation for the Supplier version.} when each client $v$ has weight $w(v)$, and this scales the client's ``unhappiness'', so that the objective function becomes $\max_{v \in C}\paren{w(v)\cdot\min_{f \in S}d(v,f)}$. In another direction, Chakrabarty, Goyal, and Krishnaswamy~\cite{ChakrGK2020} gave an extension to \ksofull, using an LP relaxation to indicate which clients are outliers, and obtaining a Hochbaum-Shmoys-like $3$-approximation. This was vastly extended by Chakrabarty and Negahbani~\cite{ChakrN2019}, implying a $3$-approximation for multiple problems, including \kso with knapsack constraints on the facilities. Bajpai, Chekuri, Chakrabarty, and Negahbani~\cite{BajpaCCN2021} generalized the aforementioned weighted version~\cite{Plesn1987} to handle outliers, matroid constraints, and knapsack constraints, obtaining constant approximation ratios for each.

In the early 2000s, Jain and Vazirani~\cite{JainV2004} introduced the notion of fault-tolerance for the Uncapacitated Facility Location (UFL) problem. The notion has thereafter been studied for various related problems: UFL~\cite{GuhaMM2003,SwamyS2008,ByrkaSS2010}; UFL with multiset solutions, often called facility \emph{placement} or \emph{allocation}~\cite{XuS2009,YanC2011,RybicB2015}; $k$-Median~\cite{SwamyS2008,KumarR2013,HajiaHLLS2016}; matroid and knapsack Median~\cite{Deng2022}; and $k$-Center~\cite{Krumke95,ChauGR98,KhullPS2000,KumarR2013}. In particular relevance to this paper, the \fkso problem was studied by Inamdar and Varadarajan~\cite{InamdV2020}. In addition, prior work also addresses alternate notions of fault-tolerance and outlier-type constraints. In a 2020 preprint, Deng~\cite{Deng2020} combines fault-tolerance with an outlier-type constraint requiring that the number of \emph{client-facility connections}, rather than the weight of satisfied clients, be at least some $m$. An altogether different notion of fault-tolerance has also been studied~\cite{ChechP2014,LiXDX2012, SonarSX2023}, where clients each want just one facility, but an adversary secretly causes some $k' \leq k$ of the chosen facilities to fail.

The round-or-cut schema that this paper applies, has found widespread usage in clustering problems, including in problems related to \ks. For example, the weighted version of \ks~\cite{Plesn1987,BajpaCCN2021} can be extended to impose different budgets to different weight classes -- i.e. there is no longer one $k$, but one $k_i$ per distinct weight $w_i$. This version admits a constant-factor approximation for certain special cases~\cite{ChakrN2023,JiaRSS2022} via the round-or-cut schema. Round-or-cut has also been used for \ks with covering constraints~\cite{AneggAKZ2022}, and for the Capacitated Facility Location problem~\cite{AnSS2014}. In the continuous clustering realm, where facilities can be picked from a potentially infinite-sized ambient metric space, round-or-cut has been used to circumvent the infinitude of the instance~\cite{ChakrNS2022}.

\section{Preliminaries}\label{sec:prelim}
Before we formally define our main problem, let us set up some important notation. 
\begin{definition}\label{def:dlv-nlv}
    Given a subset $S\subseteq F$, a client $v\in C$, and $a\in [k]$, 
    let $d_{a}(v,S)$ be the distance of $v$ to its $a$\textsuperscript{th} closest neighbor in $S$ (breaking ties arbitrarily and consistently). So $d_1(v,S) = d(v,S)$. Also let
    $N_a(v,S) \subseteq S$ denote the $a$ facilities in $S$ that are closest to $v$.
\end{definition}

\begin{definition}[\fksfull and \fksofull]
    In the \emph{\fksfull (\fks)} problem, we are given a finite metric space $(C \cup F,d)$, where $C$ is a set of $n$ clients and $F$ is a set of $\poly(n)$ facilities. We are also given a parameter $k \in \NN$, and \emph{fault-tolerances} $\set{\ell_v\in [k]}_{v \in C} $. The goal is to open $k$ facilities, i.e. pick $S \subseteq F : \abs S \leq k$, minimizing $\max_{v \in C}d_{\ell_v}(v,S)$.

    In the \emph{\fksofull (\fkso)} problem, we are given an \fks instance along with an additional parameter $m \in [n]$. The goal is to pick $S$ of size $k$ as before, along with \emph{inliers} $T \subseteq C : \abs T \geq m$, minimizing $\max_{v \in T}d_{\ell_v}(v,S)$.
\end{definition}
In the absence of fault-tolerances and outliers, i.e. in the \ks problem, the Hochbaum-Shmoys algorithm~\cite{HochbS1986} achieves a $3$-approximation as follows. It starts with a guess $r$ of the optimum value, large enough that every client $j$ has a facility within distance $r$ of itself, but otherwise arbitrary. Then it picks an arbitrary client $j$, opens a facility within distance $r$ of $j$, and deletes the set of ``children'' of $j$, which is $\child(j) := B(j,2r) \cap C = \set{v \in C: d(v,j) \leq 2r}$. Then it repeats this with the remaining clients, until there are no clients left. Observe that the $j$'s picked over the iterations -- call them the set $R$ -- has the following \emph{well-separated} property.
\begin{definition}[$r$-well-separated set]
    A set $X \subseteq C$ is \emph{$r$-well-separated} if for distinct $x,y \in X$, we have $d(x,y) > 2r$. Where $r$ is clear from context, we simply say that $X$ is \emph{well-separated}.
\end{definition}
Since $R$ is well-separated, it takes $\abs R$ clients to provide every $j \in R$ with a facility in $B(j,r)$. So if $\abs R > k$, then the guess of $r$ is too small -- we can double $r$ and retry the algorithm. On the other hand, if $\abs R \leq k$, then the guess is either correct or too large, so we halve $r$ and retry. This binary search yields the correct $r$, and the following guarantee: $\set{\child(j)}_{j \in R}$ partitions $C$, and for a $v \in \child(j)$, since $d(v,j) \leq 2r$ and we opened a facility in $B(j,r)$, there is a facility within distance $3r$ of $v$. This means that we have a $3$-approximation.

The Hochbaum-Shmoys algorithm described above, generalizes to give a $3$-approximation for \fks via the following modifications: instead of picking $j$'s into $R$ in arbitrary order, we pick them in decreasing order of $\ell_v$'s; we also open $\ell_j$ facilities in each $B(j,r) : j \in R$, instead of just one. This guarantees that, if $v \in \child(j)$, $\ell_v \leq \ell_j$, allowing us to extend the Hochbaum-Shmoys~\cite{HochbS1986} guarantee to \fks. We now formally state this algorithm.

\begin{algorithm}[H]\caption{Hochbaum-Shmoys~\cite{HochbS1986} modified for \fks}\label{alg:fks}
    \begin{algorithmic}[1]
        \Require $C$
        \State $U \gets C$
        \State $R \gets \emptyset$
        \State $S \gets \emptyset$
        \While{$U \neq \emptyset$}
            \State $j \gets \argmax_{v \in U}\ell_v$\label{alg:fks:ln:j}
            \State $R \gets R \cup \set j$
            \State $i_1,i_2,\dots,i_{\ell_j} \gets \ell_j$ arbitrary facilities in $B(j,r) \cap F$\label{alg:fks:open}\Comment{they exist by choice of $r$}
            \State $S \gets S \cup \set{i_1,i_2,\dots,i_{\ell_j}}$
            \State $\child(j) \gets \set{v \in U : d(v,j) \leq 2r}$
            \State $U \gets U \setminus \child(j)$
        \EndWhile
        \Ensure $S \subseteq F$
    \end{algorithmic}
\end{algorithm}
To show that this algorithm yields a $3$-approximation, we need to argue that $\forall v \in C$, $d_{\ell_v}(v,S) \leq 3r$. To see this, consider $j \in R: v \in \child(j)$. \Cref{alg:fks:ln:j} guarantees that $\ell_v \leq \ell_j$, so $d_{\ell_v}(v,S) \leq d_{\ell_j}(v,S)$. By triangle inequalities, this is at most $d(v,j) + d_{\ell_j}(j,S)$. By construction of $\child(j)$, $d(v,j) \leq 2r$; and since we open $\ell_j$ facilities in Line~\ref{alg:fks:open}, $d_{\ell_j}(j,S) \leq r$. We have just shown that
\begin{theorem}\label{thm:fks}
    The \fks problem admits a $3$-approximation.
\end{theorem}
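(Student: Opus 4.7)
The plan is to run Algorithm~\ref{alg:fks} with a carefully chosen radius $r$ and to argue that (i) the output $S$ has at most $k$ facilities and (ii) $\max_{v \in C} d_{\ell_v}(v,S) \leq 3r$. I will then obtain $r$ by binary-searching (or doubling/halving, as in the excerpt) over the $O(n^2)$ pairwise distances in the metric, keeping the smallest $r$ for which the algorithm both finds $\ell_j$ facilities inside every $B(j,r)$ (Line~\ref{alg:fks:open}) and ends with $|S| \leq k$. The key invariant is: whenever $r \geq \mathrm{OPT}$, both conditions hold, so the chosen $r$ satisfies $r \leq \mathrm{OPT}$ and (ii) yields the desired $3$-approximation.

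The distance bound is a short triangle-inequality computation that the excerpt foreshadows. For each $v \in C$, let $j \in R$ be the representative with $v \in \child(j)$. The argmax rule in Line~\ref{alg:fks:ln:j} guarantees $\ell_v \leq \ell_j$, and so
\begin{equation*}
d_{\ell_v}(v,S) \;\leq\; d_{\ell_j}(v,S) \;\leq\; d(v,j) + d_{\ell_j}(j,S) \;\leq\; 2r + r \;=\; 3r,
\end{equation*}
where the last inequality uses the definition of $\child(j)$ and the fact that Line~\ref{alg:fks:open} opens $\ell_j$ facilities in $B(j,r)$.

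The bulk of the work is in the feasibility direction, which is where the algorithm's design really gets used. I first observe that $R$ is $r$-well-separated: any $j' \in R$ picked after $j$ was not in $\child(j) = B(j,2r) \cap U$, so $d(j,j') > 2r$. Hence the balls $\{B(j,r)\}_{j \in R}$ are pairwise disjoint, the $\ell_j$ facilities opened inside each ball are all distinct, and $|S| = \sum_{j \in R}\ell_j$. To compare this sum to $k$, fix an optimal solution $S^\star$. Since $r \geq \mathrm{OPT}$, for every $j \in R$ we have $d_{\ell_j}(j,S^\star) \leq r$, i.e. $|S^\star \cap B(j,r)| \geq \ell_j$. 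The disjointness of the balls then yields $\sum_{j \in R}\ell_j \leq |S^\star| \leq k$, proving $|S| \leq k$. The very same inequality certifies that Line~\ref{alg:fks:open} is always executable: $S^\star \cap B(j,r)$ witnesses the required $\ell_j$ facilities inside each ball.

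The main conceptual obstacle — and the motivation for the argmax rule — is the heterogeneity of the fault-tolerances. In the plain \ks analysis, a single facility per representative suffices for every client in its $\child$; here, a high-$\ell_v$ client falling into a low-$\ell_j$ representative's $\child$ would receive too few facilities. Picking $j$ as an $\argmax$ over the surviving $\ell_v$'s enforces monotonicity along the scan and makes $\ell_v \leq \ell_j$ automatic. Once this is in hand, the rest is the standard Hochbaum-Shmoys skeleton and the binary search contributes only a $\log$-factor to the running time.
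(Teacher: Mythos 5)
Your proof is correct and follows essentially the same Hochbaum-Shmoys approach as the paper, with the same argmax modification and the identical triangle-inequality chain $d_{\ell_v}(v,S) \leq d_{\ell_j}(v,S) \leq d(v,j) + d_{\ell_j}(j,S) \leq 3r$. You are somewhat more explicit about the budget side — spelling out that the well-separated balls $\{B(j,r)\}_{j\in R}$ are disjoint and charging $\sum_{j\in R}\ell_j$ against an optimal $S^\star$ to get $|S|\leq k$ — whereas the paper folds that into its preceding discussion of the Hochbaum-Shmoys skeleton and the comment "they exist by choice of $r$"; this is a useful elaboration but not a different argument.
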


One way to achieve a $3$-approximation for the \ksofull problem, described in~\cite{ChakrGK2020}, is as follows: under a guess of $r$ as before, a linear program relaxation is used to assign variables $\cov_v \in [0,1]$ to each client $v$, representing whether $v$ is ``covered'', i.e. whether there is an open facility in $B(v,r)$. The LP-guided Hochbaum-Shmoys algorithm considers clients in decreasing order of these $\cov_v$'s, and we wait to pick facilities until the loop terminates. Then, facilities are opened near those $j \in R$ that have the $k$ largest $\abs{\child(j)}$. The LP relaxation is used to ensure that $\geq m$ clients are served in this way. This {\em does not} generalize directly to \fkso because the decreasing order of $\ell_v$'s that we employed above for \fks can conflict with the decreasing order of $\cov_v$'s (indeed, one may just expect clients $v$ with large $\ell_v$'s would be more likely to be outliers, that is, have low $\cov_v$'s). So in our algorithm for \fkso, we elect to follow the $\cov_v$ order, and explicitly force $\ell_v \leq \ell_j$ for $v$'s that we pick into $\child(j)$. This choice breaks the well-separated property of $R$, so our techniques are devoted to obtaining other well-separated sets that can guide our rounding; details of this can be found in~\Cref{sec:fkso}. When all the $\ell_v$'s are the same, though, we can indeed use the natural LP relaxation (with a slight strengthening to take care of divisibility issues), and we show this in the next section.

\section{\texorpdfstring{$3$-approximation for \ufkso}{3-approximation for UFkSO}}\label{sec:ufkso}
In this section, we address the \emph{uniform} case, where all fault-tolerances in the instance are the same, i.e.
\begin{definition}[\ufksofull (\ufkso)]
    The \ufksofull problem is a special case of the \fkso problem where, for an $\ell \in \NN$, $\forall v \in C$, $\ell_v = \ell$. 
\end{definition}
We prove that
\begin{theorem}\label{thm:ufkso-3-appx}
    The \ufkso problem admits a $3$-approximation.
\end{theorem}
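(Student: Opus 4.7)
The plan is to apply the round-or-cut schema to a natural LP, generalizing the LP-guided Hochbaum--Shmoys rounding used for \kso in~\cite{ChakrGK2020}. Guess $r$ from the polynomial-sized candidate set $\set{d(v,i) : v \in C, i \in F}$. For each such $r$, form the LP with variables $x_i, \cov_v \in [0,1]$ and constraints
\[
\sum_{i \in F} x_i \leq k, \qquad \sum_{v \in C} \cov_v \geq m, \qquad \ell \cdot \cov_v \leq \sum_{i \in B(v,r) \cap F} x_i \ \forall v \in C,
\]
additionally setting $\cov_v = 0$ for any $v$ with $\abs{B(v,r) \cap F} < \ell$ (forced integrally), and augmenting with a Chv\'{a}tal--Gomory-style family of cuts described below. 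Intuitively $\cov_v$ measures the extent to which $v$ is an inlier, and the third constraint says this must be backed by $\ell$ units of nearby facility mass.

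Given a fractional $(x,\cov)$, I would run an LP-guided version of \Cref{alg:fks}: restrict to clients with $\cov_v > 0$, sort them in decreasing order of $\cov_v$, and greedily peel off the head $j$ into a set $R$, removing $\child(j) := B(j,2r) \cap U$ from the current pool $U$. The resulting $R$ is $r$-well-separated, and every $v \in \child(j)$ has $\cov_v \leq \cov_j$ by the sort order, whence $\sum_{j \in R} \cov_j \cdot \abs{\child(j)} \geq \sum_v \cov_v \geq m$. To round, let $R' \subseteq R$ consist of the top $K := \floor{k/\ell}$ members by $\abs{\child(\cdot)}$, and around each $j \in R'$ open $\ell$ arbitrary facilities in $B(j,r) \cap F$; these exist by the preprocessing step, the total count is $\ell K \leq k$, and the triangle inequality gives each $v \in \bigcup_{j \in R'} \child(j)$ an $\ell$-th nearest open facility within distance $3r$.

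Rounding achieves coverage $\sum_{j \in R'} \abs{\child(j)} \geq m$ provided $\sum_{j \in R} \cov_j \leq K$, since then $(\cov_j)_{j \in R}$ is a feasible fractional point of the auxiliary LP $\max \sum_j c_j \abs{\child(j)}$ subject to $c_j \in [0,1], \sum_j c_j \leq K$ with value $\geq m$, and a standard exchange argument shows this LP is maximized by the integer top-$K$ solution of value $\sum_{j \in R'} \abs{\child(j)}$. Otherwise, $\sum_{j \in R} \cov_j \leq K$ itself serves as a valid cut violated by the current point: in any integral solution, $\cov_j \in \set{0,1}$, and the $\ell$ facilities required per $j$ with $\cov_j = 1$ live in pairwise-disjoint balls, forcing $\ell \sum_{j \in R} \cov_j \leq k$ and hence $\sum_{j \in R} \cov_j \leq K$ after integer rounding. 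Running the ellipsoid method with this round-or-cut oracle yields, for each $r$, either a $3r$-approximate integral solution or a certificate that the strengthened LP is infeasible (so $r < \opt$), in polynomial time.

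The main technical obstacle is precisely the need for these Chv\'{a}tal--Gomory cuts: when $\ell \nmid k$, the natural LP alone has an integrality gap driven by divisibility, so the proof must justify both the integral validity of the cut $\sum_{j \in R} \cov_j \leq K$ and the correctness of the greedy separation oracle --- i.e.\ that the $R$ it produces is $r$-well-separated and, when the rounding fails, certifiably separates the current fractional point from the integer hull. Everything else is a clean LP-guided generalization of the Hochbaum--Shmoys argument behind \Cref{thm:fks}.
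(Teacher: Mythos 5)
Your proposal matches the paper's proof essentially exactly: the same LP (WL1)--(WL5), the same decreasing-$\cov_v$ filtering to build a well-separated $R$, the same top-$\floor{k/\ell}$ rounding with the weighted-average/exchange argument, and the same Chv\'atal--Gomory cut $\sum_{j\in R}\cov_j \leq \floor{k/\ell}$ fed into round-or-cut via ellipsoid. No material differences.
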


Our algorithm begins by rounding a solution to the following LP relaxation, closely mimicking the $3$-approximation for \kso~\cite{ChakrGK2020} described in the last paragraph of the previous section. This rounding suffices when $\ell \mid k$. When $\ell \nmid k$, we identify a valid inequality for the round-or-cut framework. In the LP, the variables $\set{\cov_v}_{v \in C}$ denote whether or not a client $v \in C$ is \emph{covered} i.e. served within distance $r$; and variables $\set{x_i}_{i \in F}$ denote whether or not a facility $i \in F$ is open. $B(v,r)$ is the \emph{ball} of radius $r$ around $v$, containing all points within distance $r$ of $v$, i.e. $B(v,r) := \set{x \in C \cup F : d(v,x) \leq r}$.
\begin{align}
    &\sum_{v \in C}\cov_v \geq m\label{lp:fkso-weak:m}\tag{WL1}\\
    &\sum_{i \in F}x_i \leq k\label{lp:fkso-weak:k}\tag{WL2}\\
    \forall v \in C,\quad &\sum_{i \in F \cap B(v,r)}x_i \geq \ell\cov_v\label{lp:fkso-weak:opt-ball}\tag{WL3}\\
    \forall v \in C : d_\ell(v,F) > r,\quad &\cov_v = 0\label{lp:fkso-weak:inliers}\tag{WL4}\\
    \forall v \in C,\,i\in F,\quad &0 \leq \cov_v, x_i \leq 1\label{lp:fkso-weak:bounds}\tag{WL5}
\end{align}
Here, \eqref{lp:fkso-weak:m} enforces that at least $m$ clients must be covered, and \eqref{lp:fkso-weak:k} enforces that at most $k$ facilities can be opened. \eqref{lp:fkso-weak:opt-ball} and \eqref{lp:fkso-weak:inliers} connect the $\cov_v$ variables with the $x_i$ variables, ensuring that a client cannot be covered unless there are sufficient facilities opened within distance $r$ of it. Finally, \eqref{lp:fkso-weak:bounds} enforces that a client can be covered only once, and a facility can be opened only once. \Cref{clm:fkso:weak-lp-validity} shows that this LP is a valid relaxation of our problem. We defer its proof to \Cref{appsec:omitted-proofs}.
\begin{claim}\label{clm:fkso:weak-lp-validity}
    An instance of \ufkso is feasible iff it admits an integral solution satisfying \eqref{lp:fkso-weak:m}-\eqref{lp:fkso-weak:bounds}.
\end{claim}

Given a solution $\paren{\set{\cov_v}_{v \in C}, \set{x_i}_{i \in F}}$ satisfying \eqref{lp:fkso-weak:m}-\eqref{lp:fkso-weak:bounds}, we round as per \Cref{alg:ufkso-3-appx}. This algorithm constructs a well-separated set of \emph{representatives} $R_\cov \subseteq C$. Each client $v$ that has $\cov_v > 0$ becomes the \emph{child} of some representative, yielding a partition $\set{\child(j)}_{j \in R_\cov}$ of these clients. Then, facilities $S_\cov \subseteq F$ are opened in a manner that serves the $\floor{\frac k \ell}$ largest $\child(j)$ sets within distance $3r$.

\begin{algorithm}[H]\caption{$3$-approximation for \ufkso\label{alg:ufkso-3-appx}}
    \begin{algorithmic}[1]
        \Require $\paren{\set{\cov_v}_{v \in C}, \set{x_i}_{i \in F}}$ satisfying \eqref{lp:fkso-weak:m}-\eqref{lp:fkso-weak:bounds}
        \State $R_\cov \gets \emptyset$\label{alg:ufkso-3-appx:ln:filtering-start}
        \State $U \gets \set{v \in C: \cov_v > 0}$\label{alg:ufkso-3-appx:ln:positive-cov}
        \While{$U \neq \emptyset$}\Comment{filtering}\label{alg:ufkso-3-appx:ln:reps-loop}
            \State $j \gets \argmax_{v \in U}\cov_v$\label{alg:ufkso-3-appx:ln:max-cov}
            \State $R_\cov \gets R_\cov \cup \set v$
            \State $\child(j) \gets B(j,2r)\cap U$\label{alg:ufkso-3-appx:ln:2opt-ball}
            \State $U \gets U \setminus \child(j)$\label{alg:ufkso-3-appx:ln:disjoint}
        \EndWhile\label{alg:ufkso-3-appx:ln:filtering-end}
        \State $S_\cov \gets \emptyset$ \label{alg:ufkso-3-appx:ln:opening-start}
        \State $R' \gets R_\cov$
        \While{$\abs{S_\cov} < \floor{\frac k \ell}\cdot\ell$}
    \label{alg:ufkso-3-appx:ln:s-loop}\Comment{picking facilities to open}
            \State $j \gets \argmax_{j' \in R'}\abs{\child(j')}$\label{alg:ufkso-3-appx:ln:u-choose}
            \State $R' \gets R' \setminus \set j$
            \State $S_\cov \gets S_\cov \cup N_\ell(j,F)$\label{alg:ufkso-3-appx:ln:l-nearest}
        \EndWhile
        \State \Return $S_\cov$
        \Ensure $S_\cov \subseteq F$ \Comment{open facilities}
    \end{algorithmic}
\end{algorithm}
We argue that \Cref{alg:ufkso-3-appx} opens at most $k$ facilities, and that if $N_\ell(j,F)$ is opened, then $\child(j)$ is served within distance $3r$. Formally,
\begin{lemma}\label{lma:ufkso-distance}
    Given $\paren{\set{\cov_v}_{v \in C}, \set{x_i}_{i \in F}}$ satisfying \eqref{lp:fkso-weak:m}-\eqref{lp:fkso-weak:bounds},
    \begin{itemize}
        \item $\abs{S_\cov} \leq k$, and
        \item Let $R'_\cov$ be the clients $j$ for which $N_\ell(j,F)$ was added to $S_\cov$ in Line~\ref{alg:ufkso-3-appx:ln:l-nearest}. Then $\forall j \in R'_\cov$, $\forall v \in \child(j)$, $d_\ell(v,S_\cov) \leq 3r$.
    \end{itemize}
\end{lemma}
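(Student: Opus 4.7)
The plan is to exploit the well-separatedness of $R_\cov$ produced by the filtering loop to argue two things simultaneously: that the balls $B(j,r)$ around distinct representatives are disjoint, and that each such ball contains $\geq \ell$ facilities. Together, these let us count facilities in $S_\cov$ exactly and bound distances by three applications of the triangle inequality.

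First I would observe that every $j \in R_\cov$ has $\cov_j > 0$ (since $R_\cov$ is built from $U$, which is initialized on line~\ref{alg:ufkso-3-appx:ln:positive-cov} to clients with positive coverage). Then constraint~\eqref{lp:fkso-weak:inliers} forces $d_\ell(j,F) \leq r$, so $|F \cap B(j,r)| \geq \ell$. Consequently $N_\ell(j,F) \subseteq B(j,r)$, and in particular $|N_\ell(j,F)| = \ell$. Next, by the filtering rule on line~\ref{alg:ufkso-3-appx:ln:2opt-ball}, any two distinct $j, j' \in R_\cov$ satisfy $d(j,j') > 2r$, so $B(j,r) \cap B(j',r) = \emptyset$ by the triangle inequality. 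Hence the sets $\{N_\ell(j,F)\}_{j \in R_\cov}$ are pairwise disjoint, and each iteration of the loop at line~\ref{alg:ufkso-3-appx:ln:s-loop} adds exactly $\ell$ fresh facilities to $S_\cov$. The loop's guard $|S_\cov| < \lfloor k/\ell \rfloor \cdot \ell$ thus terminates with $|S_\cov| = \lfloor k/\ell \rfloor \cdot \ell \leq k$, proving the first bullet.

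For the second bullet, fix $j \in R'_\cov$ and $v \in \child(j)$. By the definition of $\child(j)$, we have $d(v,j) \leq 2r$. Every facility $i \in N_\ell(j,F) \subseteq S_\cov$ satisfies $d(i,j) \leq r$ (by the argument above), so by the triangle inequality $d(v,i) \leq 3r$ for all $\ell$ such facilities. This yields $d_\ell(v,S_\cov) \leq 3r$, as desired.

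The only subtlety worth flagging is ensuring that the fractional inequality~\eqref{lp:fkso-weak:opt-ball} is translated into an integral lower bound of $\ell$ facilities in $B(j,r)$; this is precisely what the auxiliary constraint~\eqref{lp:fkso-weak:inliers} is designed to provide. Everything else in the argument is a clean combination of the well-separatedness of $R_\cov$ and two triangle-inequality applications, so I do not anticipate any additional obstacle.
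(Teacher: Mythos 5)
Your proof is correct and follows essentially the same route as the paper's. The one place you add value is the first bullet: the paper simply asserts that Line~\ref{alg:ufkso-3-appx:ln:l-nearest} adds $\ell$ facilities to $S_\cov$ per iteration, whereas you justify this by showing that the sets $N_\ell(j,F)$ over $j \in R_\cov$ are pairwise disjoint, using $N_\ell(j,F)\subseteq B(j,r)$ (which follows from $\cov_j>0$ and \eqref{lp:fkso-weak:inliers}) together with the well-separatedness of $R_\cov$. This step is genuinely needed: without disjointness, a union on Line~\ref{alg:ufkso-3-appx:ln:l-nearest} might add fewer than $\ell$ new facilities, in which case $\abs{S_\cov}$ need not stay a multiple of $\ell$, and the loop guard on Line~\ref{alg:ufkso-3-appx:ln:s-loop} would only bound $\abs{S_\cov}$ by $\floor{k/\ell}\ell + \ell - 1$ at exit, which can exceed $k$ (e.g.\ $k=7,\ell=3$). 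One minor nit: you claim the loop terminates with $\abs{S_\cov}$ exactly $\floor{k/\ell}\cdot\ell$, which presumes $\abs{R_\cov}\ge\floor{k/\ell}$; if $R'$ runs out first, equality fails but the bound $\abs{S_\cov}\le k$ still holds, which is all the lemma needs. The second bullet is identical in substance to the paper's, just phrased facility-by-facility rather than via $d_\ell(v,S_\cov)\le d(v,j)+d_\ell(j,S_\cov)$.
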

\begin{proof}
    \begin{itemize}
        \item Line~\ref{alg:ufkso-3-appx:ln:l-nearest} adds $\ell$ facilities to $S_\cov$ in each iteration. So Line~\ref{alg:ufkso-3-appx:ln:s-loop} ensures that $\abs{S_\cov} \leq k$.
        \item Consider $v \in \child(j)$, $j \in R'_\cov$. By triangle inequalities, $d_\ell(v,S_\cov) \leq d(v,j) + d_\ell(j,S_\cov)$. By Line~\ref{alg:ufkso-3-appx:ln:2opt-ball}, $d(v,j) \leq 2r$. Since $N_\ell(j,F) \subseteq S_\cov$, $d_\ell(j,S_\cov) \leq d_\ell(j,F)$; and by Line~\ref{alg:ufkso-3-appx:ln:positive-cov}, $\cov_j > 0$, i.e. by \eqref{lp:fkso-weak:inliers}, $d_\ell(j,F) \leq r$.
    \end{itemize}
\end{proof}

It remains to show that $\sum_{j \in R'_\cov}\abs{\child(j)} \geq m$. We have
\begin{align}
\sum_{j \in R_\cov}\abs{\child(j)}\cov_j \geq \sum_{v \in C : \cov_v > 0}\cov_v = \sum_{v \in C}\cov_v \geq m\,,\label{eq:wtd-avg}
\end{align}
where the first inequality is by Line~\ref{alg:ufkso-3-appx:ln:max-cov} and the last inequality is by \eqref{lp:fkso-weak:m}. We also have
\begin{align}
    \sum_{j \in R_\cov}\cov_j \leq \sum_{j \in R_\cov}\sum_{i \in F \cap B(j,r)}\frac{x_i}\ell \leq \sum_{i \in F}\frac{x_i}\ell \leq \frac k \ell\,,\label{eq:sum-of-wts}
\end{align}
where the first inequality is by \eqref{lp:fkso-weak:opt-ball}; the second inequality is because $R_\cov$ is well-separated; and the last inequality is by \eqref{lp:fkso-weak:k}. So we can view the LHS in \eqref{eq:wtd-avg} as a weighted sum of $\abs{\child(j)}$ values, the weights being $\cov_j$'s. Since this weighted sum is $\geq m$ and the weights sum to $\leq k/\ell$, the $\floor{\frac k \ell}$ largest child-sets must contain at least $\frac m {k/\ell} \cdot \floor{\frac k \ell}$ elements. Hence, if $\ell \mid k$, we are done.

In fact, we observe the following even when $\ell \nmid k$: if we can replace the RHS in \eqref{eq:sum-of-wts} with $\floor{\frac k \ell}$, then the weighted-sum argument would yield $\frac m {\floor{k/\ell}}\cdot\floor{\frac k \ell} = m$. To achieve this, observe that the argument in \eqref{eq:sum-of-wts} applies to any well-separated set $R \subseteq C$, yielding $\sum_{j \in R}\cov_j \leq k/\ell$. Also, for any integral solution, the RHS can be replaced by its floor. Thus the following are valid inequalities:
\begin{align}
    \forall R \subseteq C: R\text{ is well-separated},\quad \sum_{j \in R}\cov_j \leq \floor{\frac k \ell}\,.\tag{WLCut}\label{lp:fkso-weak:floor}
\end{align}
We have showed that if \eqref{lp:fkso-weak:floor} holds for $R = R_\cov$ then we are done, i.e.
\begin{lemma}\label{lma:ufkso-inliers}
    Given $\paren{\set{\cov_v}_{v \in C}, \set{x_i}_{i \in F}}$ satisfying \eqref{lp:fkso-weak:m}-\eqref{lp:fkso-weak:bounds}, if \eqref{lp:fkso-weak:floor} holds for $R = R_\cov$ where $R_\cov$ is constructed as per Lines~\ref{alg:ufkso-3-appx:ln:filtering-start}-\ref{alg:ufkso-3-appx:ln:filtering-end}, then $S_\cov$ is a $3$-approximation.
\end{lemma}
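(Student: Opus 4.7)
The plan is to combine Lemma~\ref{lma:ufkso-distance} with a short weighted-averaging argument to certify at least $m$ inliers that are served within distance $3r$. Lemma~\ref{lma:ufkso-distance} already hands me $|S_\cov| \leq k$ and the fact that every $v \in \bigcup_{j \in R'_\cov} \child(j)$ satisfies $d_\ell(v, S_\cov) \leq 3r$. Because the sets $\set{\child(j)}_{j \in R_\cov}$ are pairwise disjoint (Line~\ref{alg:ufkso-3-appx:ln:disjoint}), I can designate $T_\cov := \bigcup_{j \in R'_\cov} \child(j)$ as the inlier set, and the whole claim reduces to the scalar inequality
\[\sum_{j \in R'_\cov} |\child(j)| \;\geq\; m.\]

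To prove this, I would reuse the two bounds already derived just above the lemma. Inequality \eqref{eq:wtd-avg} gives $\sum_{j \in R_\cov} \cov_j \cdot |\child(j)| \geq m$, and the hypothesis that \eqref{lp:fkso-weak:floor} holds at $R = R_\cov$ gives $\sum_{j \in R_\cov} \cov_j \leq \floor{k/\ell}$. Combined with $\cov_j \in [0,1]$ from \eqref{lp:fkso-weak:bounds}, the vector $(\cov_j)_{j \in R_\cov}$ is a feasible fractional solution to the problem of selecting $\floor{k/\ell}$ elements of $R_\cov$ so as to maximize $\sum_j |\child(j)|$. A standard fractional-knapsack rearrangement (shift weight from smaller $|\child(j)|$ to larger $|\child(j)|$ until each weight saturates either $0$ or $1$) then yields the upper bound
\[\sum_{j \in R_\cov} \cov_j \cdot |\child(j)| \;\leq\; \sum_{j \in R^\star} |\child(j)|,\]
where $R^\star \subseteq R_\cov$ is any collection of $\floor{k/\ell}$ representatives with the largest $|\child(j)|$ values. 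By Line~\ref{alg:ufkso-3-appx:ln:u-choose}, the greedy facility-opening loop chooses precisely such a set, so $R^\star = R'_\cov$ (breaking ties as the algorithm does). Chaining the two inequalities gives $\sum_{j \in R'_\cov} |\child(j)| \geq m$, which is what I wanted.

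The only minor edge case is when $|R_\cov| < \floor{k/\ell}$, in which case the while-loop exhausts $R_\cov$ and sets $R'_\cov = R_\cov$; the fractional-knapsack bound degenerates to $\sum_{j \in R_\cov} |\child(j)| = |\set{v : \cov_v > 0}|$, which is $\geq \sum_v \cov_v \geq m$ directly from $\cov_v \leq 1$ and \eqref{lp:fkso-weak:m}. Beyond that, no substantive obstacle is anticipated: the real work was done in establishing \eqref{eq:wtd-avg} and \eqref{eq:sum-of-wts}, and this lemma is essentially the bookkeeping step that packages those bounds together with Lemma~\ref{lma:ufkso-distance} into the $3$-approximation guarantee.
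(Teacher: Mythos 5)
Your proposal is correct and takes essentially the same route as the paper: you invoke \Cref{lma:ufkso-distance} for the distance bound and the budget bound, and then argue that $\sum_{j \in R'_\cov}\abs{\child(j)} \geq m$ by combining \eqref{eq:wtd-avg} with the hypothesized cut $\sum_{j\in R_\cov}\cov_j \leq \floor{k/\ell}$ via a fractional-knapsack (rearrangement) bound — which is precisely the weighted-averaging argument the paper carries out just before stating the lemma. Your explicit treatment of the degenerate case $\abs{R_\cov} < \floor{k/\ell}$ is a small, correct addition the paper leaves implicit.
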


Using this, we now present our overall algorithm via a round-or-cut schema.
\begin{proof}[Proof of \Cref{thm:ufkso-3-appx}]
    Given $\paren{\set{\cov_v}_{v \in C}, \set{x_i}_{i \in F}}$ satisfying \eqref{lp:fkso-weak:m}-\eqref{lp:fkso-weak:bounds}, we round as per Lines~\ref{alg:ufkso-3-appx:ln:filtering-start}-\ref{alg:ufkso-3-appx:ln:filtering-end} to obtain $R_\cov \subseteq C$. If \eqref{lp:fkso-weak:floor} holds for $R = R_\cov$, then we continue \Cref{alg:ufkso-3-appx} to obtain $S_\cov$ that satisfies the desired guarantees via \Cref{lma:ufkso-distance,lma:ufkso-inliers}. Otherwise, we know that the valid inequality \eqref{lp:fkso-weak:floor} for $R = R_\cov$ is violated. So we pass it to the ellipsoid algorithm as a separating hyperplane, obtaining fresh $\cov_v$'s with which we restart \Cref{alg:ufkso-3-appx}. By the guarantees of the ellipsoid algorithm, in polynomial time, we either round to get $S_\cov$, or detect that the guess of $r$ is too small.
\end{proof}

We conclude this section by exhibiting that the above algorithm fails for the general problem. In particular, we exhibit an infinite integrality gap when there are just two different fault-tolerances in the instance.
\subsection{\texorpdfstring{Gap example for \fkso}{Gap example for FkSO}}\label{sec:weak-lp-gap}
Consider \eqref{lp:fkso-weak:opt-ball} generalized to \fkso:
\begin{align}
    \forall v \in C,\quad &\sum_{i \in F \cap B(v,r)}x_{iv} \geq \ell_v\cov_v\,;\label{lp:fkso-weak:opt-ball'}\tag{WL3$'$}
\end{align}
and a similar generalization of \eqref{lp:fkso-weak:floor}:
\begin{align}
    \forall R \subseteq C : R\text{ is well-separated},\quad &\ceil{\sum_{v \in R}\ell_v\cov_v} \leq k\,.\label{lp:fkso-weak:floor'}\tag{WLCut$'$}
\end{align}
These, along with \eqref{lp:fkso-weak:m}-\eqref{lp:fkso-weak:k} and \eqref{lp:fkso-weak:inliers}-\eqref{lp:fkso-weak:bounds}, generalize the earlier LP to \fkso. We now show an infinite integrality gap w.r.t. this LP.
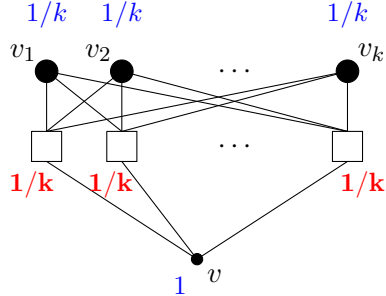
\begin{figure}[ht]
	\centering
	\begin{tikzpicture}
\filldraw (0,0) node[anchor = south east] {$v_1$} circle (0.15cm);
\filldraw (1,0) node[anchor = south east] {$v_2$} circle (0.15cm);
\draw (2.5,0) node {$\mathbf{\cdots}$};
\filldraw (4,0) node[anchor = south west] {$v_k$} circle (0.15cm);

\draw (-0.2,-0.8) -- (0.2,-0.8) -- (0.2,-1.2) -- (-0.2,-1.2) -- (-0.2,-0.8);
\draw (0.8,-0.8) -- (1.2,-0.8) -- (1.2,-1.2) -- (0.8,-1.2) -- (0.8,-0.8);
\draw (2.5,-1) node {$\mathbf{\cdots}$};
\draw (3.8,-0.8) -- (4.2,-0.8) -- (4.2,-1.2) -- (3.8,-1.2) -- (3.8,-0.8);

\draw (0,0) -- (0,-0.8);
\draw (0,0) -- (1,-0.8);
\draw (0,0) -- (4,-0.8);
\draw (1,0) -- (0,-0.8);
\draw (1,0) -- (1,-0.8);
\draw (1,0) -- (4,-0.8);
\draw (4,0) -- (0,-0.8);
\draw (4,0) -- (1,-0.8);
\draw (4,0) -- (4,-0.8);

\filldraw (2,-2.5) node[anchor = north west] {$v$} circle (0.075cm);

\draw (2,-2.5) -- (0,-1.2);
\draw (2,-2.5) -- (1,-1.2);
\draw (2,-2.5) -- (4,-1.2);

\draw (0,0.45) node[anchor = south] {{\color{blue}\small $1/k$}};
\draw (1,0.45) node[anchor = south] {{\color{blue}\small $1/k$}};
\draw (4,0.45) node[anchor = south] {{\color{blue}\small $1/k$}};
\draw (2,-2.6) node[anchor = north east]  {{\color{blue}\small $1$}};

\draw (-0.2,-1.2) node[anchor = north]  {{\color{red}\small $\bone/\bk$}};
\draw (0.85,-1.2) node[anchor = north]  {{\color{red}\small $\bone/\bk$}};
\draw (4.2,-1.2) node[anchor = north]  {{\color{red}\small $\bone/\bk$}};
\end{tikzpicture}
	\caption{\em \small One of the $k$ identical gadgets in the gap example, showing LP values in {\color{red}\textbf{red }($x$ values)} and {\color{blue}blue ($\cov$ values)}. The ``edges'' represent distance $1$, and all other distances are determined by making triangle inequalities tight. The fault-tolerances are $\ell_{v_1} = \ell_{v_2} = \cdots = \ell_{v_k} = k$, and $\ell_v = 1$.\label{fig:inf-gap}}
\end{figure}

Consider $k$ identical gadgets, each like in \Cref{fig:inf-gap}, infinitely apart from each other. Let $m = 2k$. The small client in each gadget ($v$ in \Cref{fig:inf-gap}) has fault-tolerance $1$. The big clients in each gadget ($v_1,v_2,\dots,v_k$ in \Cref{fig:inf-gap}) have fault-tolerance $k$. Within a gadget, an integral solution only benefits from either picking one facility to serve just the small client, or picking all facilities to serve all $(k+1)$ clients. So over all gadgets, an integral solution can either pick one facility per gadget, or pick all facilities in exactly one gadget, either way serving $k < m$ clients. Since all facilities are within distance $1$ of the clients in their gadget, the above is true for an integral solution with any radius dilation $\alpha \geq 1$.

But the LP can assign $x_i = 1/k$ to each of the $k^2$ facilities in the instance. This allows it to assign $\cov_v = 1$ to all the small clients, and $\cov_{v_1} = \cov_{v_2} = \cdots \cov_{v_k} = 1/k$ to all the big clients, thus serving $k \cdot 1 + k^2 \cdot \frac 1 k = 2k = m$ clients.

\section{\texorpdfstring{\fksofull}{Fault-tolerant k-Supplier with Outliers}}\label{sec:fkso}

In this section, we address \fkso in its full generality. We use $t$ to denote the number of distinct fault-tolerances in the instance, i.e. $\abs{\set{\ell_v : v \in C}} = t$. We prove that
\begin{theorem}\label{thm:fkso-O(t)}
    The \fkso problem admits a $\paren{\min\set{4t-1, 2^t+1}}$-approximation.
\end{theorem}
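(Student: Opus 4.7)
The plan is to apply the round-or-cut schema, working with a fractional $\cov$ solution that indicates each client's inlier-ness, together with a family of valid inequalities generalizing \eqref{lp:fkso-weak:floor'}. Given $\cov$, I would construct a partition of the positively-covered clients into coalitions $\set{P_j}$, each centered at a representative $j$, and then reduce to a simpler ``pick-a-subset'' problem on the coalitions: each coalition $P_j$ has a demand equal to $\max_{v \in P_j}\ell_v$ (the number of facilities that, if opened near $j$, serves all of $P_j$ within the approximation radius) and a supply equal to $\abs{P_j}$. The task becomes: pick coalitions of total demand $\leq k$ and total supply $\geq m$. If this task succeeds, we round; otherwise, the failure witnesses a violated inequality that we feed to the ellipsoid method as a separating hyperplane.

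The main technical work is the partitioning rule. The natural Hochbaum--Shmoys rule used for \ufkso fails because a client $v \in \child(j)$ may have $\ell_v > \ell_j$, so opening $\ell_j$ facilities near $j$ does not serve $v$. To fix this, when selecting $j$ (in decreasing $\cov_v$ order), I would absorb into $\child(j)$ only the $v$ within distance $2r$ of $j$ that additionally satisfy $\ell_v \leq \ell_j$, leaving the ``orphaned'' high-$\ell$ clients to be handled later. These orphans are then processed in rounds, and when an orphan cluster centered at some $j'$ with $\ell_{j'} > \ell_j$ lies within $2r$ of a previously-formed cluster at $j$, the two can be merged, with the representative being updated to the higher-$\ell$ one. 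Because each merge strictly increases the fault-tolerance of the representative, any resulting coalition has a ``chain'' of at most $t$ reps, each separated from the next by $\leq 2r$, so the coalition diameter is at most $(4t-2)r$; combining this with the last $r$ between a representative and its opened facilities yields the $(4t-1)r$ guarantee for any client's distance to its $\ell_v$\textsuperscript{th} nearest open facility. For the $(2^t+1)$-approximation, I would replace the linear chain-growth rule with a geometric one: at ``layer'' $i$ (indexing the distinct $\ell$-values from largest), the allowable merge radius doubles, yielding a coalition diameter of $2^t\cdot r$ and a final guarantee of $(2^t+1)r$. This is tighter than $4t-1$ precisely for $t \in \set{2,3}$.

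For the round step, given the coalition structure and the fractional $\cov$, I need to show that the pick-a-subset/knapsack problem admits an integral selection with total demand $\leq k$ and total supply $\geq m$. This is where the LP's strength matters: by mimicking the weighted-average argument in \eqref{eq:wtd-avg}--\eqref{eq:sum-of-wts}, using $\cov_j$ values as fractional weights on the coalitions, I can show that a $\cov$-witness with $\sum_v \cov_v \geq m$ and the appropriate strengthened cut $\sum_{j \in R}\ell_j \cov_j \leq k$ on a well-separated $R$ (plus integer-rounding strengthenings analogous to \eqref{lp:fkso-weak:floor'}) forces the existence of an integral subset meeting both thresholds. For the cut step, whenever the knapsack reduction is infeasible, the representatives of the coalitions form a well-separated set $R$ on which one of these strengthened inequalities is violated, and we extract it as the separating hyperplane.

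The main obstacle will be establishing the diameter bound on coalitions in tandem with the round/cut dichotomy: one needs the chain-merging rule to preserve enough structure that either (a) the coalition reps truly form a well-separated set on which an inequality can be violated in the cut branch, or (b) the knapsack LP on the coalitions is integrally feasible in the round branch. Controlling this is delicate because the merging process is $\cov$-dependent, and the cuts that need to be violated must be oblivious to the particular partition constructed -- so the derived inequality must be expressible purely in terms of metric and $\ell$ data. I would handle this by expressing cuts over all well-separated sets satisfying a demand bound (with $\ell_j$-weighting), closing under integer ceiling operations to strengthen them, and arguing that the particular partition the algorithm produces is always witnessed by one such inequality when the round step fails.
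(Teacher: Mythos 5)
Your high-level plan---round-or-cut, partition clients into coalitions respecting both the $\cov$-order and the $\ell$-order, then solve a combinatorial allocation problem on the coalitions---is the same skeleton as the paper's, and your geometric-radius idea for the $(2^t+1)$ bound is in the right spirit. But there are two concrete gaps that would sink the proof.

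First, your family of separating cuts is too weak. You propose to cut with inequalities of the form $\ceil{\sum_{j\in R}\ell_j\cov_j}\leq k$ over well-separated $R$ (the ``$\ell_j$-weighted, ceiling-strengthened'' inequalities). But this is precisely the family \eqref{lp:fkso-weak:floor'}, and \Cref{sec:weak-lp-gap} constructs an instance with two distinct fault-tolerances where a fractional $\cov$ satisfies \eqref{lp:fkso-weak:m}--\eqref{lp:fkso-weak:k}, \eqref{lp:fkso-weak:inliers}--\eqref{lp:fkso-weak:bounds}, \eqref{lp:fkso-weak:opt-ball'}, and all of \eqref{lp:fkso-weak:floor'}, yet no integral solution exists. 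So on that instance your rounding must fail (since nothing integral works), but your cut step will find nothing violated---the procedure stalls. The paper avoids this by cutting against the far stronger convex-hull LP \eqref{lp:fkso:m}--\eqref{lp:fkso:bounds} (with exponentially many $z_S$ variables), and producing the separating hyperplane through \Cref{lma:lambdas} with non-indicator weights $\lambda_v=\abs{\child(v)}$. Relatedly, your allocation problem is all-or-nothing (take a coalition at cost $\ell_P$ to gain $\abs P$, or skip it), whereas the paper's budgeting problem (\Cref{def:budgeting}) allows assigning any $k_P\leq\ell_P$ facilities to a part and crediting only the low-$\ell$ children served; this partial-allocation flexibility is exactly what makes the equivalence between ``budgeting infeasible'' and ``$(\lambda1)$ holds'' go through in \Cref{lma:cut}. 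With all-or-nothing you can easily declare infeasibility (e.g., two coalitions each needing cost $k$ to yield $m/2{+}1$ low-$\ell$ clients) when a simple integral solution serves $m$ clients by opening a few facilities near each representative.

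Second, your partitioning rule does not give the claimed $\Theta(t)$ diameter. You absorb into $\child(j)$ only clients within $2r$ of $j$, then merge clusters that are within $2r$. But two representatives with the \emph{same} $\ell$-value can then sit at distance just over $2r$ from each other, be joined through an intermediate representative of a different $\ell$-value, and this can chain indefinitely: with $t=2$ you can lay reps along a line at spacing $2r$ alternating $\ell\in\set{1,2}$, and (for an adversarial $\cov$-ordering) they all survive as representatives in one connected component, giving coalition diameter unbounded in terms of $t$. The paper's \Cref{alg:partition:4t-1} absorbs within radius $2tr$ (not $2r$) exactly so that any two same-$\ell$ representatives in the same component are forced to be $>2tr$ apart; the pigeon-hole argument in the proof of \Cref{lma:partition:4t-1} then bounds each chain by $t$ vertices and yields the $\rho=4t-2$ radius. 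Your ``each merge strictly increases the representative's $\ell$'' invariant is only true for directly adjacent reps, not transitively across a component, so it does not bound the component size.
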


\subsection{Strong LP Relaxation and the Round-or-Cut Schema}\label{sec:stronglp}
To circumvent the gap example in \Cref{sec:ufkso}, we adapt the following stronger linear program idea from Chakrabarty and Negahbani~\cite{ChakrN2019}. 
As before, $r$ is the guess of the optimal solution, and we have the same fractional variables $\cov_v$ indicating coverage. However, we assert that these $\cov_v$'s arise as a convex combination of integral solutions. 
More precisely, we have {\em exponentially many} auxilary variables $\set{z_S}_{S \subseteq F : \abs S \leq k}$ indicating possible locations of open facilities and the fractional amount to which they are open.
When such a solution is opened, a client $v$ is ``covered'' if there are $\ell_v$ facilities in an $r$-neighborhood. To this end, for a client $v$, we define
the collection $\calF_v := \set{S \subseteq F : \abs S \leq k \land \abs{S \cap B(v,r)} \geq \ell_v}$ of solutions which can serve $v$. 
Therefore, the coverage $\cov_v$ is simply the total fractional weight of sets in $\calF_v$. Formally, if $r$ is a correct guess, then the following (huge) LP
has a feasible solution.

\begin{align}
    \sum_{v \in C}\cov_v &\geq m\label{lp:fkso:m}\tag{L1}\\
    \forall v \in C,\, \cov_v &=  \sum_{S \in \calF_v}z_S\label{lp:fkso:sanity}\tag{L2}\\
    \sum_{S \subseteq F:\abs S \leq k}z_S &\leq 1\label{lp:fkso:sum}\tag{L3}\\
    \forall S \subseteq F, \forall v \in C,\,0 \leq z_S, \cov_v &\leq 1\label{lp:fkso:bounds}\tag{L4}
\end{align}

\eqref{lp:fkso:m} enforces that at least $m$ clients must be covered. \eqref{lp:fkso:sanity} connects the $\cov_v$ and $z_S$ variables, ensuring that a client $v$ can only be covered via solutions in $\calF_v$. \eqref{lp:fkso:sum}-\eqref{lp:fkso:bounds} enforce convexity. \eqref{lp:fkso:bounds} also enforces that each client can be covered at most once.

\begin{observation}\label{obs:integer-hull}
    All $\cov_v$'s that satisfy \eqref{lp:fkso:m}-\eqref{lp:fkso:bounds} also satisfy \eqref{lp:fkso-weak:m}-\eqref{lp:fkso-weak:bounds}.
\end{observation}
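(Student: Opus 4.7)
The plan is to exhibit, for any feasible solution $\paren{\set{\cov_v}_{v \in C}, \set{z_S}_{S \subseteq F, \abs S \leq k}}$ to \eqref{lp:fkso:m}--\eqref{lp:fkso:bounds}, explicit facility-opening variables $\set{x_i}_{i \in F}$ such that $\paren{\set{\cov_v}, \set{x_i}}$ is feasible for \eqref{lp:fkso-weak:m}--\eqref{lp:fkso-weak:bounds}. The natural projection is
\[
x_i \;:=\; \sum_{S \subseteq F,\, \abs S \leq k,\, i \in S} z_S,
\]
i.e.\ the total fractional weight of chosen sets containing $i$. With this definition, \eqref{lp:fkso-weak:m} is identical to \eqref{lp:fkso:m}, so it transfers directly; for \eqref{lp:fkso-weak:k} double-counting gives $\sum_{i \in F} x_i = \sum_S \abs S z_S \leq k \sum_S z_S \leq k$ by \eqref{lp:fkso:sum}; the bounds in \eqref{lp:fkso-weak:bounds} hold because $0 \leq x_i \leq \sum_S z_S \leq 1$, and the $\cov_v$ bounds come straight from \eqref{lp:fkso:bounds}.

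For \eqref{lp:fkso-weak:opt-ball} (or the generalization \eqref{lp:fkso-weak:opt-ball'}), I would swap the order of summation:
\[
\sum_{i \in F \cap B(v,r)} x_i \;=\; \sum_{S \subseteq F,\, \abs S \leq k} \abs{S \cap B(v,r)} \cdot z_S \;\geq\; \sum_{S \in \calF_v} \ell_v \cdot z_S \;=\; \ell_v \cov_v,
\]
where the inequality restricts to $S \in \calF_v$ (whose contributions are each at least $\ell_v$ by definition of $\calF_v$, while dropped terms are nonnegative) and the final equality is \eqref{lp:fkso:sanity}. Finally, for \eqref{lp:fkso-weak:inliers}, if $d_{\ell_v}(v,F) > r$ then no $S \subseteq F$ with $\abs S \leq k$ can contain $\ell_v$ facilities inside $B(v,r)$, so $\calF_v = \emptyset$ and \eqref{lp:fkso:sanity} forces $\cov_v = 0$.

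There isn't really a substantive obstacle here: the argument is essentially a one-step projection plus double-counting, and the only mild point to flag is that the same construction works uniformly whether the weak LP is read with a single $\ell$ (uniform case) or with per-client $\ell_v$ (the general version using \eqref{lp:fkso-weak:opt-ball'}), since the definition of $\calF_v$ already respects $\ell_v$.
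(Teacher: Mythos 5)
Your proof is correct, and it is exactly the natural argument: project the strong LP's $z$-variables onto facility variables via $x_i = \sum_{S \ni i} z_S$, then verify each weak constraint by swapping the order of summation. The paper states this as an \emph{Observation} with no written proof (treating it as routine), and your write-out fills that gap cleanly; you are also right to flag that in the context of Section~4 the constraint \eqref{lp:fkso-weak:opt-ball} should be read with the per-client $\ell_v$ as in \eqref{lp:fkso-weak:opt-ball'}, which your derivation handles without change since $\calF_v$ is already defined using $\ell_v$.
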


Also observe that we cannot efficiently figure out whether the above system is feasible or not; indeed, if so we would solve the \fksofull problem optimally. 
Nevertheless, one can use the round-or-cut schema to obtain an {\em approximation} algorithm. In order to do so, the first step is to use the {\em dual} of the 
above system to obtain the collection of all valid inequalities on the $\cov_v$'s. Recall, a valid inequality is one that every feasible $\cov_v$ must satisfy; the lemma below from the literature~\cite{ChakrGK2020}, in some sense, eliminates all the $z_S$ variables from the above program.
\begin{lemma}[{\cite[Lemma 10]{ChakrN2019}}]\label{lma:lambdas}
    Given real numbers $\set{\lambda_v}_{v \in C}$ such that
    \begin{align}
        \forall S \subseteq F,\quad \sum_{v \in C: S \in \calF_v}\lambda_v < m\,,\label{eq:lambda-lemma:1}\tag{$\lambda 1$}
    \end{align}
    the following is a valid inequality for \eqref{lp:fkso:m}-\eqref{lp:fkso:bounds}:
    \begin{align}
        \sum_{v \in C}\lambda_v\cov_v < m\,.\label{eq:lambda-lemma:2}\tag{$\lambda 2$}
    \end{align}
\end{lemma}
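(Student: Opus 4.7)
The plan is to eliminate the auxiliary $z_S$ variables from the system~\eqref{lp:fkso:m}--\eqref{lp:fkso:bounds} by expanding each $\cov_v$ via \eqref{lp:fkso:sanity}, swapping the order of summation, and applying the hypothesis \eqref{eq:lambda-lemma:1} to the resulting inner sum. This is essentially weak LP duality in disguise: once $\cov_v$ is viewed as a convex combination, the linear functional $\sum_v \lambda_v \cov_v$ becomes a nonnegative combination of exactly the quantities that \eqref{eq:lambda-lemma:1} bounds above by $m$.

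Concretely, I would fix any $\set{\cov_v}_{v \in C}$ satisfying \eqref{lp:fkso:m}--\eqref{lp:fkso:bounds} and use \eqref{lp:fkso:sanity} to obtain nonnegative $\set{z_S}_{S \subseteq F :\, \abs S \leq k}$ with $\sum_S z_S \leq 1$ and $\cov_v = \sum_{S \in \calF_v} z_S$. Interchanging the finite double sum then gives
\[
    \sum_{v \in C} \lambda_v \cov_v \;=\; \sum_{S \subseteq F :\, \abs S \leq k} z_S \cdot a_S, \qquad \text{where } a_S \;:=\; \sum_{v \in C :\, S \in \calF_v} \lambda_v.
\]
By \eqref{eq:lambda-lemma:1}, $a_S < m$ for every such $S$. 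I would now rewrite $\sum_S z_S a_S = m \sum_S z_S + \sum_S z_S(a_S - m)$. The first summand is $\leq m$ by \eqref{lp:fkso:sum}, and each term $z_S(a_S - m)$ in the second summand is nonpositive since $z_S \geq 0$ and $a_S - m < 0$, so the whole expression is at most $m$.

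To upgrade this to the strict inequality \eqref{eq:lambda-lemma:2}, I would rule out the all-zero configuration: if every $z_S$ vanished, \eqref{lp:fkso:sanity} would force $\cov_v = 0$ for every $v$, contradicting \eqref{lp:fkso:m} (which is the only regime of interest, namely $m \geq 1$). Hence some $z_{S^*} > 0$, and since $a_{S^*} - m < 0$ strictly, the second summand is strictly negative, yielding $\sum_v \lambda_v \cov_v < m$. The argument is essentially algebraic, so I do not foresee any genuine obstacle; the only modest subtlety is precisely the strict-vs-weak inequality, which the non-degeneracy of $\set{z_S}$ resolves.
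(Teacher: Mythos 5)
Your proof is correct. The paper does not reprove this lemma---it is cited directly from Chakrabarty and Negahbani (Lemma~10 of~\cite{ChakrN2019})---but your argument is precisely the standard derivation: expand $\cov_v$ via~\eqref{lp:fkso:sanity}, swap the finite double sum to get $\sum_v \lambda_v \cov_v = \sum_S z_S a_S$ with $a_S := \sum_{v: S \in \calF_v}\lambda_v$, bound $a_S < m$ by~\eqref{eq:lambda-lemma:1}, and use $z_S \geq 0$, $\sum_S z_S \leq 1$ to conclude. Your handling of the strict inequality is also fine: either all $z_S$ vanish (in which case the LHS is $0 < m$, and in any case~\eqref{lp:fkso:m} is violated so this does not arise), or some $z_{S^*} > 0$ makes the correction term $\sum_S z_S(a_S - m)$ strictly negative, and combined with $m\sum_S z_S \leq m$ this gives the strict bound.
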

Given $\set{\lambda_v}_{v \in C}$, one cannot easily check \eqref{eq:lambda-lemma:1}, and thus, a priori, one cannot see the usefulness of the above lemma.
We now briefly describe its usefulness to the round-or-cut schema. The algorithm begins with values of $\set{0 \leq \cov_v \leq 1}_{v \in C}$ that satisfy \eqref{lp:fkso:m} -- such $\cov$ is straightforward to find. We then try to use these $\cov_v$'s to ``round'' and obtain a solution 
where clients are covered within distance $\alpha\cdot r$ for desired factor $\alpha$, and if we fail, then we find a valid inequality that ``cuts'' $\cov_v$ away from the above system. If we can do so, then we can feed this separating hyperplane to the ellipsoid algorithm which would give us new $\cov_v$'s. Repeating the above procedure a polynomial number of times, we would either obtain an $\alpha$-approximation, or prove that the above system is empty implying our guess $r$ was too small.
For \fkso, the ``round'' step is via the abstract concept of a ``good partition'' where the ``radius'' of the partition dictates the approximation factor; this definition and resulting rounding algorithm is described in~\Cref{sec:fkso-rounding}. For the ``cut'' step, we show that if our rounding algorithm fails, then we can use this failure to generate $\set{\lambda_v}_{v \in C}$'s that satisfy \eqref{eq:lambda-lemma:1} but not \eqref{eq:lambda-lemma:2}, leveraging our definition of ``good partitions''. This gives our separating hyperplane using~\Cref{lma:lambdas}, and we succeed in cutting, and thus we can run the round-or-cut schema. Subsequently, we construct good partitions. In~\Cref{sec:obtain-good-partition}, we describe two methods to do this: one with ``radius'' $(4t-1)$ and the other with radius $(2^t+1)$. In~\Cref{appsec:partition-gap}, we show a limitation of our approach, via an example where this ``radius'' can be $\Omega(t)$.

Before proceeding, we make one simplification: at the beginning of every rounding step, we discard any clients that have $\cov_v = 0$, and hereafter assume, without loss of generality, that $\forall v \in C$, $\cov_v > 0$.

\subsection{Good Partitions and Implementing Round-or-cut}\label{sec:fkso-rounding}

Given $\cov_v$'s for every $v\in C$, we define a notion of a ``good partition''. Before formally defining it, we explain this operationally, hopefully giving intuition for the definition.
We start with a finer partition, and the good partition $\calP$ coarsens it. As in previous algorithms discussed so far, we have $R\subseteq C$, a set of {\em representatives}. The finer partition is $\set{\child(j)}_{j \in R}$, as motivated by our algorithms for \fks in \Cref{sec:prelim} and \ufkso in \Cref{sec:ufkso}. This time, however, we want favorable properties from both of those algorithms to coincide -- we want, for $j\in R$ and $v \in \child(j)$, $\cov_v \leq \cov_j$ as well as $\ell_v \leq \ell_j$. These desired properties of the finer partition are formalized as Property~\ref{def:good-partition:refinement}.

The property above breaks the ``well-separated'' property of $R$, which was crucial in our other algorithms in \Cref{sec:prelim,sec:ufkso}. 
Therefore, instead of requiring $R$ to be well-separated, we coalesce the child-sets of certain representatives, to get a {\em coarsening} $\calP$
of $\set{\child(j)}_{j \in R}$ such that representatives {\em across} different parts of $\calP$ are indeed well-separated. This is Property~\ref{def:good-partition:well-sep}.

Our approximation ratio is then determined by the diameter of the parts $P$'s in the good partition; so we impose a radius bound on each $P \in \calP$, requiring that the highest-fault-tolerance client in each $P$ be not too far from the rest of $P$. This is Property~\ref{def:good-partition:radius}. We are now ready to present the formal definition.
\newpage
\begin{definition}[$(\rho,\cov)$-good partition]\label{def:good-partition}
    Given a parameter $\rho \in \RR$, and $\set{0 \leq \cov_v \leq 1}_{v \in C}$ satisfying \eqref{lp:fkso:m}, a partition $\calP$ of $C$ is $(\rho, \cov)$-good if there exists $R \subseteq C$ such that the following hold.
    \begin{enumerate}
        \item Every $v \in C$ is assigned to be the $\child$ of a $j \in R$, forming a partition $\set{\child(j)}_{j \in R}$ of $C$ that refines $\calP$. Also, $\forall j \in R, \forall v \in \child(j)$, $\cov_j \geq \cov_v$ and $\ell_j \geq \ell_v$. \label{def:good-partition:refinement}
        \item For any two $j,j' \in R$ that lie in different parts of $\calP$, $d(j,j') > 2r$.
        \label{def:good-partition:well-sep}
        \item For each $P \in \calP$, let $j_P := \argmax_{v \in P}\ell_v$ (breaking ties arbitrarily). Then $\forall v \in P$, $d(j_P,v) \leq \rho r$.\label{def:good-partition:radius}
    \end{enumerate}
\end{definition}

\begin{figure}[hbt]
	\centering
	\begin{tikzpicture}
    \draw[very thick] (0,0) ellipse (1 and 1.75);
    \draw[very thick] (3,0) ellipse (1 and 2);
    \draw[very thick] (6,0) ellipse (1 and 1.5);

    \draw[thick] (2.05,0.7) -- (3.95,0.7);
    \draw[thick] (2.05,-0.7) -- (3.95,-0.7);
    \draw[thick] (-1,0) -- (1,0);

    \filldraw (0,0.5) circle (0.15);
        \draw (0,0.5) circle (0.22);
    \draw (0.7,0.7) circle (0.15);
    \draw (-0.5,1) circle (0.07);
    \filldraw (0,-0.5) circle (0.15);
    \draw (0.2,-1) circle (0.07);
    \filldraw (3,1) circle (0.07);
    \filldraw (3,0) circle (0.25);
        \draw (3,0) circle (0.33);
    \draw (2.5,-0.5) circle (0.15);
    \draw (3.5,0.5) circle (0.07);
    \filldraw (3,-1) circle (0.15);
    \draw (2.7,-1.5) circle (0.07);
    \draw (3.2,-1.5) circle (0.15);
    \filldraw (6,0) circle (0.07);
        \draw (6,0) circle (0.13);
    \draw (6.8,0) circle (0.07);

    \draw (0,0.5) -- (0.55,0.7);
    \draw (0,0.5) -- (-0.45,0.95);
    \draw (-0.45,0.95) .. controls (0,1) and (0.2,1) .. (0.55,0.7);
    \draw (0,-0.5) -- (0.15,-0.95);
    \draw (0,0.5) -- (0,-0.5);
    \draw (3,0) .. controls (2.5,0.2) and (2.5,0.8) .. (3,1);
    \draw (3,1) -- (3.44,0.5);
    \draw (3,0) -- (2.6,-0.4);
    \draw (3,0) -- (3.44,0.44);
    \draw (2.6,-0.6) -- (3,-1);
    \draw (3,-1) -- (2.75,-1.45);
    \draw (3,-1) -- (3.2,-1.35);
    \draw (6,0) -- (6.73,0);

    \draw (-0.45,0.95) .. controls (1,2) and (2,2) .. (3,1);
    \draw (0.27,-1) -- (2.35,-0.5);
    \draw (6,0) -- (3.35,-1.5);
    \draw (0.27,-1) .. controls (2,-3)  and (5,-3) .. (6,0);
\end{tikzpicture}
	\caption{\em \small An example of a $(6,\cov)$-good partition $\calP$ (\Cref{def:good-partition}). The ellipses represent $\calP$, and their subdivisions represent the $\child$ sets. All the circles are clients, with the filled-in circles being $R$, and among those, the double borders indicate the $j_P$'s. $\cov$ values are $1$ on $R$ and $1/2$ elsewhere. $\ell_v$ values are indicated by the sizes of the circles. The ``edges'' represent distance $2r$, and all other distances are obtained by making triangle inequalities tight.}
\end{figure}
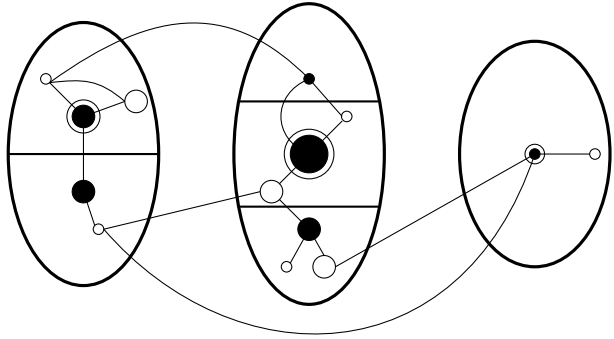

We observe here that the child-sets constructed in \Cref{sec:ufkso} are themselves a good partition, so for \ufkso, we did not need to coarsen it. This will not necessarily be the case for child-sets that we construct in \Cref{sec:obtain-good-partition}. We also observe that
\begin{observation}\label{obs:jP-in-R}
    In a $(\rho,\cov)$-good partition $\calP$, by Property~\ref{def:good-partition:refinement}, the $j_P$'s in Property~\ref{def:good-partition:radius} can be chosen such that $\forall P \in \calP$, $j_P \in R$. So we can assume, without loss of generality, that all $j_P$'s are in $R$.
\end{observation}
We prove that a good partition suffices to achieve our desired approximation. That is,
\begin{theorem}\label{thm:rc-for-good-partition}
    If we have a feasible instance with a $(\rho,\cov)$-good partition, then in polynomial time, we can either obtain a $(\rho+1)$-approximation, or identify a valid inequality for \eqref{lp:fkso:m}-\eqref{lp:fkso:bounds} that is violated by $\cov$.
\end{theorem}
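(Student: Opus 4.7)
The plan is a knapsack-based round-or-cut driven by $\calP$. I will think of each $P\in\calP$ as an item with cost $\ell_{j_P}$ and value $|P|$ in a $0/1$ knapsack with budget $k$; let $\opt$ denote its integer optimum. Since $k\leq|F|=\poly(n)$, I can compute $\opt$ in polynomial time by the standard pseudo-polynomial dynamic program.

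If $\opt\geq m$, I round: take any optimal $\calQ^*\subseteq \calP$ and, for each $P\in\calQ^*$, open $\ell_{j_P}$ arbitrary facilities in $B(j_P,r)\cap F$ (which exist since $\cov_{j_P}>0$ and $r$ is a valid guess). Writing $S$ for the resulting opening, $|S|\leq\sum_{P\in\calQ^*}\ell_{j_P}\leq k$, and for every $v\in P\in\calQ^*$, Property~3 of \Cref{def:good-partition} gives $d(v,j_P)\leq\rho r$ and $\ell_v\leq\ell_{j_P}$ (since $j_P=\argmax_{u\in P}\ell_u$), so the triangle inequality yields
\[d_{\ell_v}(v,S)\leq d_{\ell_{j_P}}(v,S)\leq d(v,j_P)+d_{\ell_{j_P}}(j_P,S)\leq\rho r+r=(\rho+1)r.\]
Since $\sum_{P\in\calQ^*}|P|\geq m$, this serves at least $m$ clients within distance $(\rho+1)r$.

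If instead $\opt<m$, I will produce a separating hyperplane by invoking \Cref{lma:lambdas} with $\lambda_{j_P}:=|P|$ for each $P$ and $\lambda_v:=0$ elsewhere. Property~2 of \Cref{def:good-partition} makes the balls $\set{B(j_P,r)}_P$ pairwise disjoint, so any $S\subseteq F$ with $|S|\leq k$ has $\calQ_S:=\set{P:S\in\calF_{j_P}}$ satisfying $\sum_{P\in\calQ_S}\ell_{j_P}\leq|S|\leq k$; that is, $\calQ_S$ is knapsack-feasible, giving $\sum_{v:S\in\calF_v}\lambda_v=\sum_{P\in\calQ_S}|P|\leq\opt<m$ and establishing \eqref{eq:lambda-lemma:1}. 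Hence $\sum_P|P|\cov_{j_P}<m$ is a valid inequality for \eqref{lp:fkso:m}--\eqref{lp:fkso:bounds}. To verify that the current $\cov$ violates it, I combine Property~1 of \Cref{def:good-partition} with the fact, ensured by the partition constructions in \Cref{sec:obtain-good-partition}, that $j_P$ attains the largest $\cov$-value in $R\cap P$: then for any $v\in P$, writing $v\in\child(j)$ for some $j\in R\cap P$, Property~1 gives $\cov_v\leq\cov_j\leq\cov_{j_P}$, so $|P|\cov_{j_P}\geq\sum_{v\in P}\cov_v$, and summing with \eqref{lp:fkso:m} yields $\sum_P|P|\cov_{j_P}\geq\sum_v\cov_v\geq m$, contradicting the valid inequality.

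The hard part will be verifying that the cut is indeed violated by $\cov$: this reduces to showing $\cov_{j_P}\geq\cov_v$ for every $v\in P$, a property not immediate from \Cref{def:good-partition} alone but guaranteed by the specific way good partitions will be built in \Cref{sec:obtain-good-partition}; everything else follows cleanly from Properties~1--3 and the disjoint-ball structure across parts.
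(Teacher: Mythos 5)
Your rounding step and your verification of \eqref{eq:lambda-lemma:1} are both sound: selecting parts $P$ in a $0/1$ knapsack with cost $\ell_{j_P}$ and value $|P|$ is a restriction of the paper's more general ``budgeting'' problem (where each part gets a continuous budget $0 \leq k_P \leq \ell_P$ and earns credit $\sum_{j \in R \cap P : \ell_j \leq k_P}|\child(j)|$), and the disjoint-ball argument for \eqref{eq:lambda-lemma:1} is exactly right. The gap is in the cut step, and it is fatal: you need $\cov_{j_P} \geq \cov_v$ for all $v \in P$ in order to conclude $\sum_P |P|\,\cov_{j_P} \geq m$, and this is \emph{not} guaranteed by the constructions in \Cref{sec:obtain-good-partition}. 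In fact, the opposite is true. Both \Cref{alg:partition:4t-1} and \Cref{alg:partition:exp-t} add representatives in \emph{decreasing} order of $\cov$, and a representative added later must have \emph{strictly larger} $\ell$ than the nearby representatives added earlier (otherwise it would have been absorbed as a child). Concretely, \Cref{clm:partition:exp-t:forest-edge} shows edges $(j,j')$ in the forest of \Cref{alg:partition:exp-t} always go from higher $\ell$ to lower $\ell$, so $j_P$ is the \emph{root}, which was inserted \emph{last} into $R \cap P$, and therefore has the \emph{smallest} $\cov$ among $R \cap P$, not the largest. Your inequality $|P|\,\cov_{j_P} \geq \sum_{v \in P} \cov_v$ runs exactly the wrong way.

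This is not a patchable oversight: the issue is that $\lambda_v = |P|\cdot\bone[v = j_P]$ is too coarse. If instead you chose $j^*_P$ to be the highest-$\cov$ representative in $R \cap P$ so as to repair \eqref{eq:lambda-lemma:2}, then \eqref{eq:lambda-lemma:1} would break, because a solution $S$ can serve $j^*_P$ with only $\ell_{j^*_P} < \ell_P$ facilities yet your $\lambda$ awards the full $|P|$. What the paper does instead (\Cref{lma:cut}) is spread $\lambda$ over \emph{all} of $R$, setting $\lambda_v = |\child(v)|$ for $v \in R$, so that Property~\ref{def:good-partition:refinement} (which does hold per-child, $\cov_j \geq \cov_v$ for $v \in \child(j)$) directly yields $\sum_v \lambda_v \cov_v \geq \sum_v \cov_v \geq m$. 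The price of this finer $\lambda$ is that verifying \eqref{eq:lambda-lemma:1} requires the richer budgeting problem of \Cref{def:budgeting} with continuous $k_P$'s (solved by the DP of \Cref{lma:dp}), since a feasible $S$ may serve some low-$\ell$ representatives in a part without serving $j_P$, and the all-or-nothing knapsack cannot account for this partial credit.
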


To prove \Cref{thm:rc-for-good-partition}, we solve a budgeting problem on the $(\rho,\cov)$-good partition. We want to distribute our budget of $k$ facilities among the $P \in \calP$, assigning each $P \in \calP$ with $k_P$ facilities that are within distance $(\rho+1)r$ of the clients in $P$. Here $k_P$ must be at most $\ell_P := \ell_{j_P}$, because at most $\ell_P$ facilities are guaranteed to exist within a bounded distance of clients in $P$. The payoff from assigning $k_P$ facilities to $P$ in this way is that the clients $\set{v \in P:\ell_v \leq k_P}$ are served within distance $(\rho+1)r$. So if $\sum_{P \in \calP}\abs{\set{v \in P : \ell_v \leq k_P}} \geq m$, we have our $(\rho+1)$-approximation. Therefore, we want our choice of $k_P$'s to maximize $\sum_{P \in \calP}\abs{\set{v \in P : \ell_v \leq k_P}}$, and this maximum to be $\geq m$. However, our analysis can only handle clients from well-separated sets; so instead, we maximize the following lower-bound on our desired quantity: $\sum_{P \in \calP}\sum_{j \in R \cap P : \ell_j \leq k_P}\abs{\child(j)}$, where we under-count by only considering $v \in \child(j)$ served if $j$ is served. Formally, our budgeting problem is the following.
\begin{definition}[Budgeting over a $(\rho,\cov)$-good partition]\label{def:budgeting}
    Given a $(\rho,\cov)$-good partition $\calP$, let $\ell_P := \max_{v \in P}\ell_v$. Find $\set{k_P \leq \ell_P}_{P \in \calP}$ such that $\sum_{P \in \calP}k_P \leq k$, maximizing $\sum_{P \in \calP}\sum_{j \in R \cap P : \ell_j \leq k_P}\abs{\child(j)}$.
    Let $\opt_B(\calP)$ denote this maximum.
\end{definition}
In \Cref{lma:round}, we show that if $\opt_B(\calP) \geq m$, then we can round. Then in \Cref{lma:cut}, we see that if $\opt_B(\calP) < m$, then we can cut. \Cref{lma:dp} shows that $\opt_B(\calP)$ can be found efficiently. Together, these three lemmas yield the proof of \Cref{thm:rc-for-good-partition}.
\begin{lemma}\label{lma:round}
    Given a $(\rho,\cov)$-good partition $\calP$, if $\opt_B(\calP) \geq m$, then we have a $(\rho+1)$-approximation.
\end{lemma}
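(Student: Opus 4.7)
The plan is to convert the $\set{k_P}_{P \in \calP}$ delivered by the budgeting problem into an explicit facility set: for each $P \in \calP$, open $k_P$ arbitrary facilities from $F \cap B(j_P, r)$, and take $S$ to be their union. Before doing anything I need to check that such facilities actually exist. Since the round-or-cut reduction discards any $v$ with $\cov_v = 0$, we have $\cov_{j_P} > 0$; by \Cref{obs:integer-hull}, the \fkso analogue of \eqref{lp:fkso-weak:inliers} must hold, forcing $d_{\ell_{j_P}}(j_P, F) \leq r$. (If this inequality instead failed, it would itself be a directly testable separating hyperplane, and we would cut rather than round.) Therefore $\abs{F \cap B(j_P, r)} \geq \ell_P \geq k_P$, so the facility set $S$ is well-defined.

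Next I would bound $\abs{S}$ and show that every relevant client is served within $(\rho+1)r$. By \Cref{obs:jP-in-R} each $j_P$ lies in $R$, so Property~\ref{def:good-partition:well-sep} gives $d(j_P, j_{P'}) > 2r$ whenever $P \neq P'$. Hence the balls $B(j_P, r)$ are pairwise disjoint, which means the facilities opened for different parts are all distinct, yielding $\abs{S} = \sum_{P \in \calP} k_P \leq k$ from the budgeting constraint. For coverage, fix any $P \in \calP$ and any $j \in R \cap P$ with $\ell_j \leq k_P$, and pick $v \in \child(j)$. Property~\ref{def:good-partition:refinement} places $v$ in the same part $P$ as $j$ and gives $\ell_v \leq \ell_j \leq k_P$; Property~\ref{def:good-partition:radius} gives $d(v, j_P) \leq \rho r$. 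By the triangle inequality, all $k_P$ facilities we opened inside $B(j_P, r)$ lie within distance $(\rho+1)r$ of $v$, and since $k_P \geq \ell_v$ this yields $d_{\ell_v}(v, S) \leq (\rho+1)r$.

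Finally, since the $\child$ sets partition $C$ by Property~\ref{def:good-partition:refinement}, the served clients decompose as a disjoint union of $\child(j)$'s, so their total count is exactly $\sum_{P \in \calP}\sum_{j \in R \cap P : \ell_j \leq k_P}\abs{\child(j)} = \opt_B(\calP) \geq m$, which gives the $(\rho+1)$-approximation. I do not anticipate a serious obstacle; the lemma is essentially a clean accounting exercise once the good-partition structure and the $k_P$'s are in hand. The only mildly delicate point is the existence of enough facilities inside $B(j_P, r)$, which is handled through \Cref{obs:integer-hull} and the \fkso version of \eqref{lp:fkso-weak:inliers} as above.
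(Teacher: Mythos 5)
Your proof is correct and follows essentially the same route as the paper's: build $S$ by opening $k_P$ facilities near each $j_P$, use Property~\ref{def:good-partition:radius} and the triangle inequality for the distance bound, and use the refinement structure to count $\geq m$ served clients. The only cosmetic difference is that you open arbitrary facilities from $F\cap B(j_P,r)$ (and so separately verify they exist via $d_{\ell_{j_P}}(j_P,F)\le r$), whereas the paper opens $N_{k_P}(j_P,F)$ and invokes the same inequality inside the distance chain; the two are interchangeable and both are sound.
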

\begin{proof}
    Let $\set{k_P}_{P \in \calP}$ be an optimal solution to the budgeting problem (\Cref{def:budgeting}). Define $S := \cup_{P \in \calP}N_{k_P}(j_P,F)$. So $\abs S \leq k$. We show that $S$ serves $\geq m$ clients within distance $(\rho+1)r$.
    
    Define $T := \uplus_{P \in \calP}\uplus_{j \in R \cap P: \ell_j \leq k_P}\child(j)$. Then $\abs T = \sum_{P \in \calP}\sum_{j \in R \cap P: \ell_j \leq k_P}\abs{\child(j)} = \opt_B(\calP) \geq m$. We complete this proof by showing that $\forall v \in T$, $d_{\ell_v}(v,S) \leq (\rho+1)r$. For this, fix $v \in T$. By triangle inequalities, we have that $d_{\ell_v}(v,S) \leq d(v,j_P) + d_{\ell_v}(j_P,S)$. By Property~\ref{def:good-partition:radius}, $d(v,j_P) \leq \rho r$, so it remains to show that $d_{\ell_v}(j_P,S) \leq r$.
    
    By definition of $T$, $d_{\ell_v}(j_P,S) \leq d_{k_P}(j_P,S)$. Since $N_{k_P}(j_P,F) \subseteq S$, $d_{k_P}(j_P,S) \leq d_{k_P}(j_P,F)$. By definitions of $k_P$ and $\ell_P$, $d_{k_P}(j_P,F) \leq d_{\ell_P}(j_P,F) = d_{\ell_{j_P}}(j_P,F)$. But $\cov_{j_P} > 0$; so by \Cref{obs:integer-hull} and \eqref{lp:fkso-weak:inliers}, $d_{\ell_{j_P}}(j_P,F) \leq r$.
\end{proof}
\begin{lemma}\label{lma:cut}
    Given a $(\rho,\cov)$-good partition $\calP$, if $\opt_B(\calP) < m$, then we find a valid inequality for \eqref{lp:fkso:m}-\eqref{lp:fkso:bounds} that is violated by $\cov$.
\end{lemma}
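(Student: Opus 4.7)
The plan is to invoke \Cref{lma:lambdas}: produce $\set{\lambda_v}_{v \in C}$ satisfying \eqref{eq:lambda-lemma:1} such that the associated inequality \eqref{eq:lambda-lemma:2} is violated by the current $\cov$. Guided by the partition, I would set $\lambda_j := \abs{\child(j)}$ for every representative $j \in R$ and $\lambda_v := 0$ for $v \in C \setminus R$.

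Verifying the violation of \eqref{eq:lambda-lemma:2} is the easy step. Property~\ref{def:good-partition:refinement} of a good partition yields $\cov_j \geq \cov_v$ whenever $v \in \child(j)$, so
$$\sum_{v \in C} \lambda_v \cov_v \;=\; \sum_{j \in R} \abs{\child(j)}\cov_j \;\geq\; \sum_{j \in R}\sum_{v \in \child(j)}\cov_v \;=\; \sum_{v \in C} \cov_v \;\geq\; m,$$
using \eqref{lp:fkso:m} at the end, which is exactly the failure of \eqref{eq:lambda-lemma:2}.

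The main work, and the step I expect to be the main obstacle, is verifying \eqref{eq:lambda-lemma:1}: for every $S \subseteq F$ with $\abs{S} \leq k$,
$$\sum_{j \in R\,:\,\abs{S \cap B(j,r)} \geq \ell_j} \abs{\child(j)} \;<\; m.$$
My plan is to extract from $S$ a feasible solution to the budgeting problem of \Cref{def:budgeting} whose objective upper bounds this sum, and then invoke the hypothesis $\opt_B(\calP) < m$. For each $P \in \calP$, set
$$k_P \;:=\; \min\!\paren{\ell_P,\;\abs{S \cap \textstyle\bigcup_{j \in R \cap P}B(j,r)}}.$$
Feasibility is immediate: $k_P \leq \ell_P$ by construction, and Property~\ref{def:good-partition:well-sep} forces $d(j,j') > 2r$ for representatives $j,j'$ in distinct parts, so the per-part ball unions $\bigcup_{j \in R \cap P}B(j,r)$ are pairwise disjoint and $\sum_{P \in \calP} k_P \leq \abs{S} \leq k$.

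The crucial observation that drives the domination is this: whenever $j \in R \cap P$ satisfies $\abs{S \cap B(j,r)} \geq \ell_j$, we have $k_P \geq \min(\ell_P,\ell_j) = \ell_j$, where the equality uses $\ell_j \leq \ell_P$ because $j \in P$ and $\ell_P = \max_{v \in P}\ell_v$. Therefore every such $j$ is counted in the budgeting objective for the part $P$ containing it, and so the displayed sum is bounded above by $\opt_B(\calP) < m$. Plugging these $\lambda_v$'s into \Cref{lma:lambdas} supplies the desired separating hyperplane. The only place I need both combinatorial properties of a good partition simultaneously is in this last step: Property~\ref{def:good-partition:refinement} for the violation direction, and Property~\ref{def:good-partition:well-sep} for the disjointness that makes the $k_P$'s budget-feasible, so the proof hinges on these two properties interacting cleanly through the budgeting relaxation.
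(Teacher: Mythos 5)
Your proof is correct and follows essentially the same strategy as the paper: the same choice of $\lambda_v = \abs{\child(v)}\cdot\bone_{v\in R}$, the same verification of \eqref{eq:lambda-lemma:2} via Property~\ref{def:good-partition:refinement}, and the same core idea for \eqref{eq:lambda-lemma:1}---turn any $S$ with $\abs{S}\leq k$ into a feasible budgeting solution using the disjointness from Property~\ref{def:good-partition:well-sep}. The only difference is cosmetic: the paper argues \eqref{eq:lambda-lemma:1} by contradiction and sets $k'_P := \ell_{j'_P}$ for $j'_P := \argmax_{j\in R\cap P:\,S\in\calF_j}\ell_j$, whereas you argue directly and set $k_P := \min\paren{\ell_P,\abs{S\cap\bigcup_{j\in R\cap P}B(j,r)}}$; your version handles the edge case of a part with no representative $j$ having $S\in\calF_j$ a bit more cleanly (the paper's $\argmax$ over an empty set is left implicit), but the two constructions serve the same purpose.
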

\begin{proof}We appeal to \Cref{lma:lambdas} mentioned in~\Cref{sec:stronglp}. $\forall v \in C$, define
\[\lambda_v := \begin{cases}
\abs{\child(v)} &\text{if }v \in R\text{, and}\\
0 &\text{otherwise.}\end{cases}\]
Note that
\begin{align*}
    \sum_{v \in C}\lambda_v\cov_v = \sum_{j \in R}\lambda_j\cov_j &= \sum_{j \in R}\abs{\child(j)}\cov_j = \sum_{j \in R}\sum_{v \in \child(j)}\cov_j\\
    &\geq \sum_{j \in R}\sum_{v \in \child(j)}\cov_v&\dots\text{by Property~\ref{def:good-partition:refinement}}\\
    &= \sum_{v \in C}\cov_v&\dots\text{by \Cref{def:good-partition}}\\
    &\geq m\,,&\dots\text{by \eqref{lp:fkso:m}}
\end{align*}
i.e. these $\lambda_v$'s violate \eqref{eq:lambda-lemma:2}. So by \Cref{lma:lambdas}, it suffices to show that \eqref{eq:lambda-lemma:1} holds for these $\lambda_v$'s. 

Suppose not, i.e. $\exists S_0 \subseteq F : \abs{S_0} \leq k$ and $\sum_{v \in C: S_0 \in \calF_v}\lambda_v \geq m$. Then, devise a candidate solution $\set{k'_P}_{P \in \calP}$ for the budgeting problem in \Cref{def:budgeting}, as follows. For each $P \in \calP$, if $\exists j \in R \cap P$ such that $S_0 \in \calF_j$, then set $k'_P$ to be the largest fault-tolerance among such $j$'s; that is, where $j'_P := \argmax_{j \in R \cap P: S_0 \in \calF_j}\ell_j$, set $k'_P := \ell_{j'_P}$. Otherwise, i.e. when there is no such $j$ and $j'_P$ is not well-defined, set $k'_P := 0$. By definitions, $\forall P \in \calP$, $k'_P \leq \ell_P$.

Also, by Property~\ref{def:good-partition:well-sep}, $\set{B(j'_P,r)}_{P \in \calP}$ is pairwise disjoint. Since $S_0 \in \calF_{j'_P}$ for each $P \in \calP$, we then have $\sum_{P \in \calP}k'_P \leq \sum_{P \in \calP}\abs{S_0 \cap B(j'_P,r)} \leq \abs{S_0} \leq k$. So $\set{k'_P}_{P \in \calP}$ is indeed a candidate solution for the budgeting problem. We evaluate the objective function of the budgeting problem (see \Cref{def:budgeting}) on $\set{k'_P}_{P \in \calP}$:
\begin{align*}
    \sum_{P \in \calP}\sum_{j \in R \cap P : \ell_j \leq k'_P}\abs{\child(j)} &= \sum_{P \in \calP}\sum_{j \in R \cap P: \ell_j \leq k'_P}\lambda_j\\
    &\geq \sum_{P \in \calP}\sum_{j \in R \cap P : S_0 \in \calF_j}\lambda_j&\dots\text{by choice of $k'_P$'s}\\
    &=\sum_{j \in R:S_0 \in \calF_j}\lambda_j &\dots\text{by \Cref{def:good-partition}}\\
    &=\sum_{v \in C: S_0 \in \calF_v}\lambda_v&\dots\text{by choice of $\lambda_v$'s}\\
    &\geq m &\dots\text{by supposition.}
\end{align*}
So $\set{k'_P}_{P \in \calP}$ is a candidate solution to the budgeting problem, for which the objective function evaluates to $\geq m$, contradicting $\opt_B(\calP) < m$. Hence \eqref{eq:lambda-lemma:1} holds for our chosen $\lambda_v$'s, and \eqref{eq:lambda-lemma:2} is the desired valid inequality that is violated by $\cov$.
\end{proof}

\begin{lemma}\label{lma:dp}
    The budgeting problem in \Cref{def:budgeting} can be solved in polynomial time.
\end{lemma}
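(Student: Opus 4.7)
The plan is to observe that the budgeting problem is a variant of the bounded-budget knapsack problem, and solve it by a straightforward dynamic program. For each part $P \in \calP$, the only choice is the integer $k_P \in \{0,1,\dots,\ell_P\}$, which incurs a ``cost'' of $k_P$ from the total budget $k$, and contributes a ``value''
\[
v_P(k_P) \;:=\; \sum_{j \in R \cap P \,:\, \ell_j \leq k_P} \abs{\child(j)}.
\]
Each $v_P(\cdot)$ can be precomputed in $O(|R \cap P|\log|R \cap P|)$ time by sorting the representatives in $P$ by $\ell_j$ and taking prefix sums.

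Next, enumerate $\calP = \{P_1,\dots,P_s\}$ (with $s \leq n$) and set up a table $A[i,b]$ for $i \in \{0,1,\dots,s\}$ and $b \in \{0,1,\dots,k\}$, where $A[i,b]$ is the optimal budgeting value restricted to the parts $P_1,\dots,P_i$ using total budget at most $b$. The base case is $A[0,b] = 0$, and the recurrence is
\[
A[i,b] \;=\; \max_{0 \leq k_{P_i} \leq \min(\ell_{P_i},\,b)} \Bigl( A[i-1,\, b - k_{P_i}] \,+\, v_{P_i}(k_{P_i}) \Bigr).
\]
The desired value is $\opt_B(\calP) = A[s,k]$, and the optimal $\set{k_P}_{P\in\calP}$ can be recovered by the standard back-pointer argument.

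For running time, note that $k$ is at most $\abs{F} = \poly(n)$ since the instance is feasible only if $k$ does not exceed the number of facilities, so the table has $\poly(n)$ entries. Each entry is computed by taking the maximum over at most $k+1 = \poly(n)$ options, each of which is an $O(1)$ lookup and addition. Hence the total runtime is $\poly(n)$. There is no real obstacle here beyond checking that $k$ is polynomially bounded, which holds in our setting; the rest is textbook knapsack DP applied to the independent per-part choices.
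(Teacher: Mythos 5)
Your proof is correct and takes essentially the same approach as the paper: a knapsack-style dynamic program indexed by (prefix of parts, remaining budget), with the same recurrence maximizing over the per-part choice $k_{P_i}$. Your explicit note that $k$ must be polynomially bounded (since $k \leq \abs{F} = \poly(n)$) is a small but welcome detail the paper leaves implicit.
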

\begin{proof}
    We proceed via dynamic programming. Let $N := \abs{\calP}$. Without loss of generality, say $\calP =: \set{P_1,P_2,\dots,P_N}$. For brevity, $\forall a \in [N]$, we say $L_a := \ell_{P_a}$. To handle base cases in our DP, we set the convention that $P_0 := \emptyset$. Now define the entries in our DP table: $\forall \nu \in [N]\cup\set 0$ and $\forall b \in [k]\cup\set 0$,
    \begin{align}
        M[\nu,b] := \max_{\set{k_a \leq L_a}_{a=1}^\nu : \sum_{a=1}^\nu k_a \leq b}\sum_{a=1}^\nu\sum_{j \in R \cap P_a : \ell_j \leq k_a}\abs{\child(j)}\,.\label{eq:DP-defn}\tag{DP-defn}
    \end{align}
    The desired entry is $M[N,k]$, as the corresponding $\set{k_a}_{a=1}^N$ becomes, upon renaming as $\set{k_{P_a} = k_a}_{a=1}^N$, the $k_P$'s that we want.
    
    The base cases are: $M[0,0] = 0$; $\forall \nu \in [N]$, $M[\nu,0] = 0$; and $\forall b \in [k]$, $M[0,b] = 0$. The DP table has $O(Nk) = O(nk)$ entries; so in polynomial time, we can fil it via the following recurrence.
    \begin{align}
        M[\nu,b] := \max_{\ell = 0}^{\min(b,L_\nu)}\paren{M[\nu-1,b-\ell] + \sum_{j \in R \cap P_\nu: \ell_j \leq \ell}\abs{\child(j)}}\,.\label{eq:DP-rec}\tag{DP-rec}
    \end{align}
    We also remember, for each entry $M[\nu,b]$, the $\ell$ that maximizes the RHS of \eqref{eq:DP-rec}. Note, in \eqref{eq:DP-defn}, that the RHS for $M[N,k]$ corresponds, up to renaming, with the RHS in the objective function (see \Cref{def:budgeting}). Thus it remains to show that \eqref{eq:DP-rec} is correct wrt \eqref{eq:DP-defn}.
    \begin{itemize}
        \item To show that LHS $\leq$ RHS, consider the solution $\set{k^*_a}_{a=1}^\nu$ corresponding to $M[\nu,b]$. By \eqref{eq:DP-defn}, $k^*_\nu \leq \min(b,L_\nu)$. So $\set{k^*_a}_{a=1}^{\nu-1}$ is a candidate solution for $M\bracket{\nu-1,b-k^*_a}$, i.e. $\sum_{a=1}^{\nu-1} \sum_{j \in R \cap P_a : \ell_j \leq k^*_a}\abs{\child(j)} \leq M\bracket{\nu-1,b-k^*_a}$, so
        \begin{align*}
            \text{LHS} = M[\nu,b] &= \sum_{a=1}^\nu\sum_{j \in R\cap P_a:\ell_j \leq k^*a}\abs{\child(j)}\\
            &\leq M\bracket{\nu-1,b-k^*_a} + \sum_{j \in R \cap P_\nu:\ell_j \leq k^*_\nu}\abs{\child(j)} \leq \text{RHS}
        \end{align*}
        since the RHS is a maximum.
        \item To show that RHS $\leq$ LHS, fix an $\ell \in \set{0,\dots,\min(b,L_\nu)}$, and let $\set{k'_a}_{a=1}^{\nu-1}$ be the solution corresponding to $M[\nu-1,b-\ell]$. Setting $k'_\nu = \ell$ yields $\set{k'_a}_{a=1}^\nu$, a candidate solution for $M[\nu,b]$. So
        \begin{align*}
            M[\nu-1,b-\ell] + \sum_{j \in R \cap P_\nu: \ell_j \leq \ell}\abs{\child(j)} &= \sum_{a=1}^\nu\sum_{j \in R \cap P_a:\ell_j \leq k'_a}\abs{\child(j)} \leq M[\nu,b] = \text{LHS}
        \end{align*}
        since $M[\nu,b]$ is a maximum by \eqref{eq:DP-defn}.
        
        As the RHS maximizes over $\ell \in \set{0,\dots,\min(b,L_\nu)}$, we are done.
    \end{itemize}
\end{proof}

\begin{proof}[Proof of \Cref{thm:rc-for-good-partition}]
    Given a $(\rho,\cov)$-good partition $\calP$, we solve the budgeting problem (\Cref{def:budgeting}), which we can do efficiently due to \Cref{lma:dp}, and obtain $\opt_B(\calP)$. If $\opt_B(\calP) \geq m$, \Cref{lma:round} guarantees a $(\rho+1)$-approximation; otherwise, \Cref{lma:cut} gives a valid inequality that is violated by $\cov$. We pass the valid inequality as a separating hyperplane to the ellipsoid algorithm, and restart our rounding process with fresh $\cov_v$'s. By the guarantees of ellipsoid, in polynomial time, we either round to obtain a $(\rho+1)$-approximation, or detect that the guess of $r$ is too small.
\end{proof}

\subsection{Obtaining a good partition} \label{sec:obtain-good-partition}
\begin{theorem} Given $\set{0 \leq \cov_v \leq 1}_{v \in C}$, in polynomial time, we can obtain the following:
\begin{enumerate}
    \item a $(4t-2,\cov)$-good partition, and
    \item a $(2^t,\cov)$-good partition.
\end{enumerate}
\end{theorem}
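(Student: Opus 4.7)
The plan is, for both good partitions, to first build a common refinement $\{\child(j)\}_{j \in R}$ satisfying Property~\ref{def:good-partition:refinement}, and then coarsen it into $\calP$ in two different ways to obtain the two radius bounds.

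\textbf{Refinement.} Process clients $v$ in decreasing order of $\cov_v$, maintaining a representative set $R$. For each $v$: if there exists $j \in R$ with $d(v,j) \leq 2r$ and $\ell_j \geq \ell_v$, add $v$ to $\child(j)$; otherwise, add $v$ to $R$ and set $\child(v) := \{v\}$. Property~\ref{def:good-partition:refinement} then holds immediately ($\cov_j \geq \cov_v$ because $j$ was added earlier in the $\cov$ sweep, and $\ell_j \geq \ell_v$ by construction). A key structural consequence, which I expect to exploit repeatedly, is that any two close representatives $j, j' \in R$ (with $d(j,j') \leq 2r$) must have different fault-tolerances, and the lower-$\ell$ one was added first (and so has the larger $\cov$).

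\textbf{Coarsening.} Both constructions then process the distinct fault-tolerance levels $\ell^{(1)} > \cdots > \ell^{(t)}$ in decreasing order, incrementally building $\calP$. At level $s$, for each representative $j$ at that level: if $j$ is ``close enough'' to some representative already inside an existing part, absorb $\child(j)$ into that part, \emph{merging} multiple existing parts if $j$ bridges several; otherwise, start a new part with $j_P := j$. What differs between the two bounds is the threshold. For the $(4t-2,\cov)$-good partition, use the minimal threshold $2r$ (so Property~\ref{def:good-partition:well-sep} is enforced just barely), giving a linear growth of part radius as $\rho_s = 4s-2$. For the $(2^t,\cov)$-good partition, use an exponentially growing threshold so that the part radius can be shown inductively to be at most $2^s r$ after $s$ levels, absorbing the cost of multi-part merges by design. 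Both algorithms run in polynomial time.

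\textbf{Main obstacle.} The hard part is establishing Property~\ref{def:good-partition:radius}, in particular controlling how the radius grows when a new representative at level $s$ bridges multiple existing parts at once; naively, the diameter can grow faster than $\rho_{s-1} + O(r)$ per level. The key device I expect to use is the freedom in the choice of $j_P$ among tied max-$\ell$ representatives: after a multi-part merge, selecting a \emph{centrally located} max-$\ell$ representative keeps the distances bounded. For the $(4t-2)$ case, I expect an inductive argument on $s$ that tracks the bridging chain through the $\ell$-hierarchy, using that any such chain traverses at most $t$ distinct fault-tolerance levels and that tied-max-$\ell$ representatives after a merge are pairwise within $4r$. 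For the $(2^t)$ case, doubling the allowed radius at each level creates enough slack to absorb bridging blow-ups by construction, but one must verify that the larger absorption threshold does not itself create a violation of Property~\ref{def:good-partition:well-sep} between a part not entered by $j$ and a newly-formed or newly-enlarged one.
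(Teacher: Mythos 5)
Your refinement step uses a fixed child-threshold of $2r$, and this is the core of the gap. The paper's algorithms crucially use a much larger threshold already at the refinement stage: $\child(j) = \{v : d(v,j) \leq 2tr \land \ell_v \leq \ell_j\}$ in \Cref{alg:partition:4t-1}, and $\child(j) = \{v : d(v,j) \leq 2^{\height(j)}r \land \ell_v \leq \ell_j\}$ in \Cref{alg:partition:exp-t}. That generous threshold is not a cosmetic choice; it is what makes the diameter of a component of $2r$-close representatives bounded. With $2tr$, the pigeonhole argument in the proof of \Cref{lma:partition:4t-1} shows a shortest path in $G$ between two representatives in the same part has at most $t$ vertices, because if two representatives $u,v$ with $\ell_u = \ell_v$ were within $2(t-1)r$ of each other, the later one would have been swallowed as a child of the earlier one. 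With your $2r$ threshold that contradiction evaporates, and the claim fails.

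Concretely, take $t=2$ and a collinear chain of representatives $j_1,j_2,\dots,j_{2n+1}$ with $d(j_i,j_{i+1}) = 2r$, where the odd-indexed $j_i$ have the larger fault-tolerance $\ell^{(1)}$ and the even-indexed ones have $\ell^{(2)} < \ell^{(1)}$, and the $\cov$ values are chosen so the even-indexed (lower-$\ell$) clients are processed first. Under your refinement every $j_i$ enters $R$ (each neighbor of $j_i$ within $2r$ has strictly larger $\ell$, so $j_i$ cannot become anyone's child), and your level-by-level coarsening with a $2r$ absorption threshold must merge the whole chain into one part in order to respect Property~\ref{def:good-partition:well-sep}. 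The resulting part has radius $\Omega(n)\cdot r$ from any choice of $j_P$, which for large $n$ far exceeds $(4t-2)r = 6r$. Freedom in choosing $j_P$ among tied max-$\ell$ representatives does not help: all the level-$1$ representatives in the chain lie on a line of length $4nr$. The same chain breaks the $2^t$ variant, since enlarging the absorption threshold only makes the chain collapse into one part faster without shrinking the underlying distances.

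The structural fact you flagged — that any two $2r$-close representatives have distinct $\ell$ and the lower-$\ell$ one was added earlier — is correct but strictly weaker than what you need; it does not imply that $2r$-chains have bounded length, because the $\ell$-values along a chain can oscillate. The paper's insight is to push the eventual radius budget into the $\child$ threshold itself (linearly in $t$, or exponentially in the forest height), so that would-be long chains are pruned at refinement time; you have instead tried to keep the refinement tight and pay in the coarsening, and the example above shows that cost is unbounded. Your own ``Main obstacle'' paragraph already flags that the radius growth during multi-part merges is not controlled, and the ``I expect'' framing there is where the proof stops; the missing ingredient is precisely the enlarged refinement threshold and the accompanying path-length (for $4t-2$) or strictly-increasing-height (for $2^t$) argument that the paper supplies.
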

\Cref{thm:rc-for-good-partition} follows from \Cref{lma:partition:4t-1,lma:partition:exp-t}.
\begin{algorithm}[tbh]\caption{Finding a $(4t-2,\cov)$-good partition}\label{alg:partition:4t-1}
    \begin{algorithmic}[1]
        \Require $\set{0 \leq \cov_v \leq 1}_{v \in C}$
        \State $U \gets C$
        \State $R \gets \emptyset$
        \While{$U \neq \emptyset$}
            \State $j \gets \argmax_{v \in U}\cov_v$\label{alg:partition:4t-1:ln:max-cov-R}
            \State $R \gets R \cup \set j$
            \State $\child(j) \gets \set{v \in U: d(v,j) \leq 2tr \land \ell_v \leq \ell_j}$\label{alg:partition:4t-1:ln:child}
            \State $U \gets U \setminus \child(j)$ \label{alg:partition:4t-1:ln:remove-child}
        \EndWhile
        \State $\calP \gets \emptyset$
        \State $G \gets (R,E:=\set{\set{j,j'} : d(j,j') \leq 2r})$ \label{alg:partition:4t-1:ln:G}\Comment{undirected graph}
        \State $\calC \gets$ connected components of $G$\label{alg:partition:4t-1:ln:conn}
        \State $\calP \gets \set{\cup_{j \in V}\child(j)}_{V \in \calC}$\label{alg:partition:4t-1:P-from-conn}
        \Ensure A partition $\calP$ of $C$.
    \end{algorithmic}
\end{algorithm}
\begin{lemma}\label{lma:partition:4t-1}
    \Cref{alg:partition:4t-1} yields a $(4t-2,\cov)$-good partition.
\end{lemma}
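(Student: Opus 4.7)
The plan is to verify each of the three conditions of \Cref{def:good-partition} for the partition $\calP$ output by \Cref{alg:partition:4t-1} with radius parameter $\rho = 4t-2$, using the set $R$ and the child-sets constructed in the algorithm as witnesses. Properties~\ref{def:good-partition:refinement} and~\ref{def:good-partition:well-sep} follow directly from the construction. The greedy choice $j=\argmax_{v\in U}\cov_v$ in Line~\ref{alg:partition:4t-1:ln:max-cov-R} ensures $\cov_j\geq\cov_v$ for every $v\in\child(j)$, the child predicate in Line~\ref{alg:partition:4t-1:ln:child} explicitly enforces $\ell_v\leq\ell_j$, and because $j\in\child(j)$ the while-loop terminates only once every $v\in C$ is assigned to some $\child(j)$, with the child-sets disjoint since $U$ strictly shrinks. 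Line~\ref{alg:partition:4t-1:P-from-conn} defines each part of $\calP$ as a union of child-sets over one connected component of $G$, so $\set{\child(j)}_{j\in R}$ refines $\calP$. If $j,j'\in R$ lie in different parts of $\calP$ they lie in different components of $G$, and by the edge set $E$ in Line~\ref{alg:partition:4t-1:ln:G} we get $d(j,j')>2r$, i.e.\ Property~\ref{def:good-partition:well-sep}.

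The substantive content is Property~\ref{def:good-partition:radius}. Fix a part $P\in\calP$ arising from a component $V\subseteq R$ of $G$, and $v\in P$. By \Cref{obs:jP-in-R} we may take $j_P\in R$, and then $j_P\in\child(j_P)$ forces $j_P\in V$. Write $v\in\child(j)$ for some $j\in V$; Line~\ref{alg:partition:4t-1:ln:child} gives $d(v,j)\leq 2tr$, so by triangle inequality it suffices to prove $d(j_P,j)\leq 2(t-1)r$. Since every edge of $G$ witnesses metric distance $\leq 2r$, this reduces to bounding the number of edges on a shortest $j_P$-to-$j$ path in $G[V]$ by $t-1$.

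The heart of the proof, and the main obstacle, is the following structural claim: \emph{along any simple path $u_0,u_1,\dots,u_k$ in $G[V]$, the fault-tolerances $\ell_{u_0},\dots,\ell_{u_k}$ are pairwise distinct.} Since only $t$ distinct $\ell$-values exist in $C$, this forces $k\leq t-1$. The proof rests on a ``child-exclusion'' observation: if $j,j'\in R$ satisfy $d(j,j')\leq 2tr$ then $\ell_j\neq\ell_{j'}$, since otherwise whichever of the two is picked first in Line~\ref{alg:partition:4t-1:ln:max-cov-R} would have absorbed the other via Line~\ref{alg:partition:4t-1:ln:child}, contradicting that both are retained as representatives. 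Now assume for contradiction that some simple path in $G[V]$ has a repeated $\ell$-value, and among all pairs $a<b$ with $\ell_{u_a}=\ell_{u_b}$ choose one minimizing $b-a$. The triangle inequality along the path gives $d(u_a,u_b)\leq 2(b-a)r$, while child-exclusion forces $d(u_a,u_b)>2tr$, so $b-a\geq t+1$. On the other hand, by the minimality of $b-a$, the values $\ell_{u_{a+1}},\dots,\ell_{u_{b-1}}$ are pairwise distinct and all distinct from $\ell_{u_a}=\ell_{u_b}$, giving $b-a$ distinct $\ell$-values, so $b-a\leq t$; contradiction.

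Chaining everything, $d(j_P,j)\leq (t-1)\cdot 2r$ and $d(j_P,v)\leq 2(t-1)r+2tr=(4t-2)r$, establishing Property~\ref{def:good-partition:radius} and completing the proof. The only nonroutine step is the path-repetition argument; the rest is direct verification of the algorithm's invariants.
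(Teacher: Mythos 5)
Your proof is correct and follows the same overall strategy as the paper: verify Properties~\ref{def:good-partition:refinement} and~\ref{def:good-partition:well-sep} directly from the construction, and for Property~\ref{def:good-partition:radius} bound the hop-diameter of the induced subgraph $G[R\cap P]$ via a pigeonhole argument on fault-tolerance values, combined with the ``child-exclusion'' observation that two distinct representatives with the same fault-tolerance must be farther than $2tr$ apart. Where you differ is in how the key structural claim is argued. The paper shows that a shortest $j_1$-$j_2$ path in $G[R\cap P]$ has at most $t$ vertices by choosing a same-$\ell$ pair $u,v$ minimizing \emph{metric distance} $d(u,v)$ and then invoking a ``replace $j_1,j_2$ by $u,v$ and repeat'' recursion, whose progress toward a strictly smaller $d$-value is not entirely transparent. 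You instead choose the pair minimizing the \emph{index gap} $b-a$ along the path, which makes the minimality step immediate: $\ell_{u_a},\dots,\ell_{u_{b-1}}$ must be pairwise distinct, so $b-a\leq t$, while the triangle inequality along the path plus child-exclusion force $b-a\geq t+1$. This is cleaner and arguably tightens a small wrinkle in the paper's recursive argument; it also yields the marginally stronger statement that \emph{every} simple path in $G[R\cap P]$ (not merely a shortest one) has pairwise distinct fault-tolerances. The final triangle-inequality chaining to $(4t-2)r$ matches the paper's.
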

\begin{proof}
    Consider $\calP$, the output of \Cref{alg:partition:4t-1}, and the $\child$ and $R$ constructed alongside. Line~\ref{alg:partition:4t-1:ln:remove-child} ensures that $\set{\child(j)}_{j \in R}$ is a partition of $C$. Line~\ref{alg:partition:4t-1:P-from-conn} ensures that this partition is a refinement of $\calP$. Lines~\ref{alg:partition:4t-1:ln:max-cov-R} and~\ref{alg:partition:4t-1:ln:child} construct $\child$ as desired, ensuring that $\forall j \in R$, $\forall v \in \child(j)$, $\cov_j \geq \cov_v$ and $\ell_j \geq \ell_v$. So Property~\ref{def:good-partition:refinement} holds.

    Now consider $P_1,P_2 \in \calP$, $x_1 \in R \cap P_1, x_2 \in R \cap P_2 : P_1 \neq P_2$. By Lines~\ref{alg:partition:4t-1:ln:G}-\ref{alg:partition:4t-1:ln:conn}, $R \cap P_1$ and $R \cap P_2$ are distinct connected components in $\calC$, so $\set{x_1,x_2} \notin E$, i.e. $d(x_1,x_2) > 2r$. This shows that Property~\ref{def:good-partition:well-sep} holds.

    Finally, consider $P \in \calP$, and $v \in P$ s.t. $v \in \child(j_1)$ for $j_1 \in R$. By Line~\ref{alg:partition:4t-1:P-from-conn}, $j_1 \in R \cap P$. Also consider a different $j_2 \in R \cap P$. By Lines~\ref{alg:partition:4t-1:ln:G}-\ref{alg:partition:4t-1:ln:conn}, $R \cap P \in \calC$. In $G$, consider $\pi$, the shortest $j_1$-$j_2$ path passing entirely through $R \cap P$. We claim that
    \begin{claim*}
        $\pi$ contains at most $t$ vertices.
    \end{claim*}
    \begin{claimproof}
        Suppose not. Then, by the pigeonhole principle, $\pi$ contains vertices $u,v \in R \cap P$ s.t. $u \neq v$ and $\ell_u = \ell_v$. Choose such $u,v$ minimizing $d(u,v)$, and consider the $u$-$v$ subpath $\pi'$ of $\pi$. If $\pi'$ contains $>t$ vertices, then we can replace $j_1,j_2$ with $u,v$ and repeat our argument to obtain a smaller $d(u,v)$ -- contradicting our choice of $u,v$. So $\pi'$ contains $\leq t$ vertices, i.e. $d(u,v) \leq 2(t-1)r$; but since $u,v \in R$, this contradicts Line~\ref{alg:partition:4t-1:ln:child}.
    \end{claimproof}
    So $d(j_1,j_2) \leq 2(t-1)r$, i.e. by Line~\ref{alg:partition:4t-1:ln:child}, $d(v,j_2) \leq d(v,j_1) + d(j_1,j_2) \leq 2tr + 2(t-1)r = (4t-2)r$. We have just showed that, $\forall v \in P, j \in R \cap P$, $d(v,j) \leq (4t-2)r$. By \Cref{obs:jP-in-R}, this implies Property~\ref{def:good-partition:radius} for $\rho = (4t-2)$.
\end{proof}

\begin{algorithm}[bht]\caption{Finding a $(2^t,\cov)$-good partition}\label{alg:partition:exp-t}
    \begin{algorithmic}[1]
        \State $U \gets C$
        \State $(R,E) \gets (\emptyset,\emptyset)$ \Comment{initializing an empty directed forest}
        \State $\forall v \in U$, $\height(v) \gets 0$ \Comment{height in the forest; $\height(v) = 0 \implies v \notin R$}
        \State $\Roots \gets \emptyset$ \Comment{tracking roots in the forest}
        \While{$U \neq \emptyset$}
            \State $j \gets \argmax_{v \in U}\cov_v$\label{alg:partition:exp-t:ln:max-cov}
            \State $R \gets R \cup \set j$
            \State $E \gets E \cup \set{(j,j') : j' \in \Roots \land d(j,j') \leq 2^{\height(j')}r}$\label{alg:partition:exp-t:ln:add-edges}
            \State $\Roots \gets \paren{\Roots \setminus \set{j' : (j,j') \in E}} \cup \set j$\label{alg:partition:exp-t:ln:update-roots}
            \State $\height(j) \gets 1 + \max_{(j,j') \in E}\height(j')$ \Comment{convention: max over $\emptyset$ is $0$}\label{alg:partition:exp-t:ln:calc-height}
            \State $\child(j) \gets \set{v \in U: d(v,j) \leq 2^{\height(j)}r \land \ell_v \leq \ell_j}$\label{alg:partition:exp-t:ln:child}
            \State $U \gets U \setminus \child(j)$\label{alg:partition:exp-t:ln:remove-child}
        \EndWhile
        \State $\calT \gets$ connected components in the forest $(R,E)$\label{alg:partition:exp-t:ln:trees}\Comment{each component induces a tree}
        \State $\calP \gets \set{\cup_{j \in V}\child(j)}_{V \in \calT}$\label{alg:partition:exp-t:ln:P-from-trees}
    \end{algorithmic}
\end{algorithm}

\begin{lemma}\label{lma:partition:exp-t}
    \Cref{alg:partition:exp-t} yields a $(2^t,\cov)$-good partition.
\end{lemma}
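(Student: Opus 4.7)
I verify the three properties of \Cref{def:good-partition} for the partition $\calP$ and set $R$ built by \Cref{alg:partition:exp-t}, taking $\rho = 2^t$. Property~\ref{def:good-partition:refinement} is immediate from the algorithm: Line~\ref{alg:partition:exp-t:ln:max-cov} selects $j$ in decreasing $\cov$-order, Line~\ref{alg:partition:exp-t:ln:child} forces $\ell_v \le \ell_j$ for every $v \in \child(j)$, Line~\ref{alg:partition:exp-t:ln:remove-child} ensures the $\child$-sets are disjoint and cover $C$, and Line~\ref{alg:partition:exp-t:ln:P-from-trees} makes them refine $\calP$. The conceptual heart is a single structural claim about the forest $(R,E)$: \emph{if $j'$ is a child of $j$ in the forest, then $\ell_j > \ell_{j'}$}. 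I prove this by contradiction: if $\ell_j \le \ell_{j'}$, then at the earlier iteration when $j'$ was processed, $j$ was still in $U$, and the distance bound $d(j,j') \le 2^{\height(j')}r$ that justifies adding the edge later (Line~\ref{alg:partition:exp-t:ln:add-edges}) already satisfies both conditions in Line~\ref{alg:partition:exp-t:ln:child} at that earlier iteration, so $j$ would have been placed in $\child(j')$ and could never become a root in $R$. Since there are only $t$ distinct $\ell$-values, this forces every root-to-leaf path to have at most $t$ vertices, hence $\height(j) \le t$ for every $j \in R$; moreover, the root of each tree is (strictly) the maximum-$\ell$ vertex in that tree.

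\textbf{Telescoping and the remaining two properties.} The second ingredient is a geometric telescoping estimate: for any descendant $u$ of a root $\tilde j$ in the forest, walking the unique path $\tilde j = j_0, j_1, \ldots, j_k = u$ with strictly decreasing heights and using $d(j_{i-1}, j_i) \le 2^{\height(j_i)} r$ from Line~\ref{alg:partition:exp-t:ln:add-edges}, the triangle inequality gives $d(\tilde j, u) \le (2^{\height(\tilde j)} - 2^{\height(u)}) r$. For Property~\ref{def:good-partition:well-sep}, suppose $j, j' \in R$ lie in distinct final trees with $d(j,j') \le 2r$ and $j$ is added after $j'$; letting $\tilde j'$ be the current root of $j'$'s tree when $j$ is processed, the triangle inequality and the telescoping estimate give $d(j, \tilde j') \le 2r + (2^{\height(\tilde j')} - 2)r \le 2^{\height(\tilde j')} r$, so Line~\ref{alg:partition:exp-t:ln:add-edges} would have linked $j$'s tree to $\tilde j'$'s, a contradiction. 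For Property~\ref{def:good-partition:radius}, by \Cref{obs:jP-in-R} we may assume $j_P \in R$, and the structural claim identifies $j_P$ as the root of the tree $R \cap P$, with $\height(j_P) \le t$. For any $v \in P$, picking $j \in R \cap P$ with $v \in \child(j)$, Line~\ref{alg:partition:exp-t:ln:child} gives $d(v, j) \le 2^{\height(j)} r$ and the telescoping gives $d(j, j_P) \le (2^{\height(j_P)} - 2^{\height(j)})r$, which combine to $d(v, j_P) \le 2^{\height(j_P)} r \le 2^t r$.

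\textbf{Main obstacle.} The only nontrivial step is the structural claim that every tree edge points from a higher-$\ell$ vertex to a lower-$\ell$ vertex: this is what caps tree depth at $t$ (and hence $\height(\cdot)$ at $t$), and also what pins the root of each tree to be its maximum-$\ell$ vertex so that $j_P$ is indeed the root. Everything else --- the disjointness of $\child$-sets, the partition into trees, and the geometric telescoping built out of the doubling-style radii $2^{\height(\cdot)}r$ --- is routine bookkeeping in the Hochbaum--Shmoys mold once that claim is in hand.
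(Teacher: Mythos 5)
Your proof is correct and takes essentially the same route as the paper's: the same structural claim that every tree edge $(j,j') \in E$ has $\ell_j > \ell_{j'}$ (proved by noting that otherwise $j$ would have been absorbed into $\child(j')$), the same conclusion that $\height(\cdot) \leq t$ and that $j_P$ is the root of its tree, and the same geometric telescoping $d(\tilde{j},u) \leq (2^{\height(\tilde{j})} - 2^{\height(u)})r$ to get Property~\ref{def:good-partition:radius}. Your treatment of Property~\ref{def:good-partition:well-sep} is in fact slightly more careful than the paper's terse statement: by routing through $\tilde{j}'$, the current root of $j'$'s tree at the time $j$ is processed, and combining $d(j,j') \leq 2r$ with the telescoping bound $d(j',\tilde{j}') \leq (2^{\height(\tilde{j}')} - 2)r$, you correctly handle the case where $j'$ itself is no longer in $\Roots$ when $j$ arrives.
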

\begin{proof}
    Consider $\calP$, the output of \Cref{alg:partition:exp-t}, and the $\child$ and $R$ constructed alongside. Note that, since Line~\ref{alg:partition:exp-t:ln:add-edges} only creates edges to $\Roots$, and Line~\ref{alg:partition:exp-t:ln:update-roots} updates $\Roots$ accordingly, $(R,E)$ is indeed a forest.

    Line~\ref{alg:partition:exp-t:ln:remove-child} ensures that $\set{\child(j)}_{j \in R}$ is a partition of $C$. Line~\ref{alg:partition:exp-t:ln:P-from-trees} ensures that this partition is a refinement of $\calP$. Lines~\ref{alg:partition:exp-t:ln:max-cov} and~\ref{alg:partition:exp-t:ln:child} construct $\child$ as desired, ensuring that $\forall j \in R$, $\forall v \in \child(j)$, $\cov_j \geq \cov_v$ and $\ell_j \geq \ell_v$. So Property~\ref{def:good-partition:refinement} holds.

    Now consider $P_1,P_2 \in \calP$, $x_1 \in R\cap P_1$, $x_2 \in R \cap P_2$. Without loss of generality, suppose $x_2$ was added to $R$ after $x_1$; if $d(x_1,x_2) \leq 2r$, then by Lines~\ref{alg:partition:exp-t:ln:add-edges} and~\ref{alg:partition:exp-t:ln:calc-height}, we would have $d(x_2,x_1) \in E$, i.e. $x_1,x_2$ would lie in the same connected component in $\calT$. So by Lines~\ref{alg:partition:exp-t:ln:trees}-\ref{alg:partition:exp-t:ln:P-from-trees}, $P_1 = P_2$. This shows that Property~\ref{def:good-partition:well-sep} holds.

    Finally, note that
    \begin{claim}\label{clm:partition:exp-t:forest-edge}
        $(j,j') \in E \implies \ell_j > \ell_{j'}$.
    \end{claim}
    \begin{claimproof}
        Since $(j,j') \in E$, we know that $j'$ was added to $R$ before $j$, and $d(j,j') \leq 2^{\height(j')}r$. So if $\ell_j \leq \ell_{j'}$, then by Line~\ref{alg:partition:exp-t:ln:child}, we would have $j \in \child(j')$, contradicting the fact that $j \in R$.
    \end{claimproof}
    Now fix $P \in \calP$, and consider $j_P$ which, by \Cref{obs:jP-in-R}, lies in $R \cap P$, and hence by Lines~\ref{alg:partition:exp-t:ln:trees}-\ref{alg:partition:exp-t:ln:P-from-trees}, $R \cap P$ induces a tree in $(R,E)$. \Cref{clm:partition:exp-t:forest-edge} tells us that $j_P$ is the root in this tree, and that $\height(j_P) \leq t$. So by Line~\ref{alg:partition:exp-t:ln:add-edges}, for any $j \in R \cap P$, $d(j_P,j) \leq \paren{2^{\height(j_P)} - 2^{\height(j)}}r \leq \paren{2^t - 2^{\height(j)}}r$. Now consider $v \in P : v \in \child(j)$ for a $j \in R \cap P$. Then $d(v,j) \leq 2^{\height(j)}$, so $d(v,j_P) \leq d(v,j) + d(j,j_P) \leq \paren{2^t - 2^{\height(j)} + 2^{\height(j)}}r = 2^t r$. Thus Property~\ref{def:good-partition:radius} holds for $\rho = 2^t$.
\end{proof}

\section{Conclusion}

In this paper, we have studied the \fksofull problem and presented a $(4t-1)$-approximation when there are $t$ distinct fault tolerances. While this gives the optimal $3$-approximation for the uniform version of the problem (improving upon the recent result~\cite{InamdV2020}), the parameter $t$ could be as large as $k$.
To obtain our result, we needed to resort to the powerful hammer of the round-or-cut schema, and indeed used a very strong LP relaxation. This was necessary since, as we saw in \Cref{sec:weak-lp-gap}, natural LP relaxations and their strengthenings have unbounded integrality gaps. We also show a $\Omega(t)$-bottleneck to our approach (\Cref{appsec:partition-gap}), and this raises the intriguing question: are there $O(1)$-approximations for the \fkso problem? As noted in \Cref{sec:intro}, the authors are not aware of clustering problems where the version without outliers has a constant approximation (as we saw in \Cref{sec:prelim}, \fks does), but the outlier version doesn't. Perhaps \fkso is such a candidate example. This also raises the question of designing inapproximability results for metric clustering problems, which has not been explored much. We leave all these as interesting avenues of further study.


\bibliographystyle{plain}
\bibliography{references}

\appendix

\section{\texorpdfstring{Proof of \Cref{clm:fkso:weak-lp-validity}}{Proof of the weak LP's validity}}\label{appsec:omitted-proofs}
Consider a feasible solution $S^*$ that serves inliers $T^*$. Set
\begin{itemize}
    \item $\forall v \in C$, $\cov_v = \bone_{v \in T^*}$, and
    \item $\forall i \in F$, $y_i = \bone_{i \in S^*}$.
\end{itemize}
These satisfy \eqref{lp:fkso-weak:m}, \eqref{lp:fkso-weak:k}, and \eqref{lp:fkso-weak:bounds} by construction. Now note that, for a $v \in T^*$, $N_{\ell_v}(v,F) \subseteq S$; so \eqref{lp:fkso-weak:opt-ball} is satisfied. Furthermore, for a $v \in C$, if $d_{\ell_v}(v,F) > r$ then $v \notin T^*$, satisfying \eqref{lp:fkso-weak:inliers}.

Conversely, given an integral solution satisfying \eqref{lp:fkso-weak:m}-\eqref{lp:fkso-weak:bounds}, we can construct $S^* = \set{i \in F : y_i = 1}$, and $T^* = \set{v \in C : \cov_v = 1}$. \eqref{lp:fkso-weak:k} implies $\abs{S^*} \leq k$, and \eqref{lp:fkso-weak:m} implies $w(T^*) \geq W$. For any $v \in T^*$,
\begin{align*}
    \abs{S^* \cap B(v,r)} &= \sum_{i \in F \cap B(v,r)}y_i&\dots\text{by construction of }S^*\\
    &\geq \ell\cov_v = \ell\,, &\dots\text{by \eqref{lp:fkso-weak:opt-ball} and construction of }T^*
\end{align*}
so $d_{\ell_v}(v,S^*) \leq r$.

\section{Limiting Example for Good-Partition Rounding} \label{appsec:partition-gap}
In order to achieve a better approximation factor than $\Omega(t)$, we will need to move beyond the overall schema of using a good partition (\Cref{def:good-partition}) to round solutions to \eqref{lp:fkso:m}-\eqref{lp:fkso:bounds}. This can be seen via the following example, illustrated in \Cref{fig:part-gap}. Here $r = 1, n = t, m = 1$. $C$ is the set $\{v_1,\cdots,v_t\}$, with each client $v_a$ having fault-tolerance $\ell_{v_a}=a$. $F$ is the union of $t$ sets $\set{F_a}_{a=1}^t$, where $F_a = \set{i_{a1},i_{a2},\dots,i_{ak}}$, for a total of $tk$ facilities in $F$. Each client $v_a$ has distance $2$ to $v_{a+1}$ and $v_{a-1}$, and distance $1$ to each facility in $F_a$. Remaining distances are determined by making triangle inequalities tight.

Consider the following $(\cov,z)$ satisfying \eqref{lp:fkso:m}-\eqref{lp:fkso:bounds}. We set $z_{F_a} = \frac 1 {aH_t}$ for each $a \in [t]$, where $H_t$ is the $t$\textsuperscript{th} Harmonic number; and set all other $z_S$'s to zero. This allows us to set $\cov_{v_a} = \frac 1 {aH_t}$ for each $a \in [t]$. Under this $(\cov,z)$, observe that $\forall v_a,v_b \in C$, $v_a \neq v_a \land \cov_a \geq \cov_b \implies \ell_a < \ell_b$; so Property~\ref{def:good-partition:refinement} can only hold if all clients are in the same piece of the partition, i.e. $\calP = \set{C}$. This means that a $(\rho,\cov)$-good partition can only be attained for $\rho \geq 2(t-1)$, so upon applying \Cref{thm:rc-for-good-partition}, this approach attains a $(2t-1)$-approximation at best.

\begin{figure}[ht]
	\centering
	\begin{tikzpicture}[square/.style={regular polygon,regular polygon sides=4}]
\filldraw (0,0) node[anchor = south east] {$v_1$} circle (0.075cm);
\filldraw (2.5,0) node[anchor = south east] {$v_2$} circle (0.100cm);
\filldraw (5,0) node[anchor = south east] {$v_3$} circle (0.125cm);
\draw (7.5,0) node {$\mathbf{\cdots\quad\cdots\quad\cdots}$};
\filldraw (10,0) node[anchor = south west] {$v_t$} circle (0.150cm);

\draw (-0.9,-0.8) -- (-0.7,-0.8) -- (-0.7,-1) -- (-0.9,-1) -- (-0.9,-0.8);
\draw (-0.7,-0.8) node[anchor = south east] {\tiny $i_{11}$};
\draw (-0.3,-0.8) -- (-0.1,-0.8) -- (-0.1,-1) -- (-0.3,-1) -- (-0.3,-0.8);
\draw (-0.2,-0.8) node[anchor = south west] {\tiny $i_{12}$};
\draw (0.3,-0.9) node {$\cdots$};
\draw (0.6,-0.8) -- (0.8,-0.8) -- (0.8,-1) -- (0.6,-1) -- (0.6,-0.8);
\draw (0.6,-0.8) node[anchor = south west] {\tiny $i_{1k}$};
\draw (1.6,-0.8) -- (1.8,-0.8) -- (1.8,-1) -- (1.6,-1) -- (1.6,-0.8);
\draw (1.8,-0.8) node[anchor = south east] {\tiny $i_{21}$};
\draw (2.2,-0.8) -- (2.4,-0.8) -- (2.4,-1) -- (2.2,-1) -- (2.2,-0.8);
\draw (2.3,-0.8) node[anchor = south west] {\tiny $i_{22}$};
\draw (2.8,-0.9) node {$\cdots$};
\draw (3.1,-0.8) -- (3.3,-0.8) -- (3.3,-1) -- (3.1,-1) -- (3.1,-0.8);
\draw (3.1,-0.8) node[anchor = south west] {\tiny $i_{2k}$};
\draw (4.1,-0.8) -- (4.3,-0.8) -- (4.3,-1) -- (4.1,-1) -- (4.1,-0.8);
\draw (4.3,-0.8) node[anchor = south east] {\tiny $i_{31}$};
\draw (4.7,-0.8) -- (4.9,-0.8) -- (4.9,-1) -- (4.7,-1) -- (4.7,-0.8);
\draw (4.8,-0.8) node[anchor = south west] {\tiny $i_{32}$};
\draw (5.3,-0.9) node {$\cdots$};
\draw (5.6,-0.8) -- (5.8,-0.8) -- (5.8,-1) -- (5.6,-1) -- (5.6,-0.8);
\draw (5.7,-0.8) node[anchor = south west] {\tiny $i_{3k}$};
\draw (9.1,-0.8) -- (9.3,-0.8) -- (9.3,-1) -- (9.1,-1) -- (9.1,-0.8);
\draw (9.3,-0.8) node[anchor = south east] {\tiny $i_{11}$};
\draw (9.7,-0.8) -- (9.9,-0.8) -- (9.9,-1) -- (9.7,-1) -- (9.7,-0.8);
\draw (9.8,-0.8) node[anchor = south west] {\tiny $i_{12}$};
\draw (10.3,-0.9) node {$\cdots$};
\draw (10.6,-0.8) -- (10.8,-0.8) -- (10.8,-1) -- (10.6,-1) -- (10.6,-0.8);
\draw (10.6,-0.8) node[anchor = south west] {\tiny $i_{1k}$};

\draw[thin] (0,0) -- (-0.8,-0.8);
\draw[thin] (0,0) -- (-0.2,-0.8);
\draw[thin] (0,0) -- (0.7,-0.8);
\draw[thin] (2.5,0) -- (1.7,-0.8);
\draw[thin] (2.5,0) -- (2.3,-0.8);
\draw[thin] (2.5,0) -- (3.2,-0.8);
\draw[thin] (5,0) -- (4.2,-0.8);
\draw[thin] (5,0) -- (4.8,-0.8);
\draw[thin] (5,0) -- (5.7,-0.8);
\draw[thin] (10,0) -- (9.2,-0.8);
\draw[thin] (10,0) -- (9.8,-0.8);
\draw[thin] (10,0) -- (10.7,-0.8);

\draw[very thick] (0,0) -- (2.5,0);
\draw[very thick] (2.5,0) -- (5,0);
\draw[very thick] (5,0) -- (6,0);
\draw[very thick] (9,0) -- (10,0);



\draw (0,0.4) node[anchor = south] {{\color{blue}\small $\frac{1}{H_t}$}};
\draw (2.5,0.4) node[anchor = south] {{\color{blue}\small $\frac{1}{2H_t}$}};
\draw (5,0.45) node[anchor = south] {{\color{blue}\small $\frac{1}{3H_t}$}};
\draw (10,0.45) node[anchor = south] {{\color{blue}\small $\frac{1}{tH_t}$}};

\draw (-0.8,-1) node[anchor = north] {{\color{red}\tiny $\frac{1}{H_t}$}};
\draw (-0.2,-1) node[anchor = north] {{\color{red}\tiny $\frac{1}{H_t}$}};
\draw (0.3,-1.3) node {\color{red}\tiny$\cdots$};
\draw (0.7,-1) node[anchor = north] {{\color{red}\tiny $\frac{1}{H_t}$}};
\draw (1.7,-1) node[anchor = north] {{\color{red}\tiny $\frac{1}{2H_t}$}};
\draw (2.3,-1) node[anchor = north] {{\color{red}\tiny $\frac{1}{2H_t}$}};
\draw (2.8,-1.3) node {\color{red}\tiny$\cdots$};
\draw (3.2,-1) node[anchor = north] {{\color{red}\tiny $\frac{1}{2H_t}$}};
\draw (4.2,-1) node[anchor = north] {{\color{red}\tiny $\frac{1}{3H_n}$}};
\draw (4.8,-1) node[anchor = north] {{\color{red}\tiny $\frac{1}{3H_n}$}};
\draw (5.3,-1.3) node {\color{red}\tiny$\cdots$};
\draw (5.7,-1) node[anchor = north] {{\color{red}\tiny $\frac{1}{3H_n}$}};
\draw (9.2,-1) node[anchor = north] {{\color{red}\tiny $\frac{1}{tH_t}$}};
\draw (9.8,-1) node[anchor = north] {{\color{red}\tiny $\frac{1}{tH_t}$}};
\draw (10.3,-1.3) node {\color{red}\tiny$\cdots$};
\draw (10.7,-1) node[anchor = north] {{\color{red}\tiny $\frac{1}{tH_t}$}};
\end{tikzpicture}
	\caption{\em \small An example showing the limitations of good partitions, with a solution to \eqref{lp:fkso:m}-\eqref{lp:fkso:bounds} shown in {\color{red}\textbf{red }($z$ values)} and {\color{blue}blue ($\cov$ values)}. The thin ``edges'' represent distance $1$, the thick ``edges'' represent distance $2$, and all other distances are determined by making triangle inequalities tight. The fault-tolerances are $\ell_{v_1} = 1, \ell_{v_2} = 2, \dots, \ell_{v_t} = t$. \label{fig:part-gap}}
\end{figure}
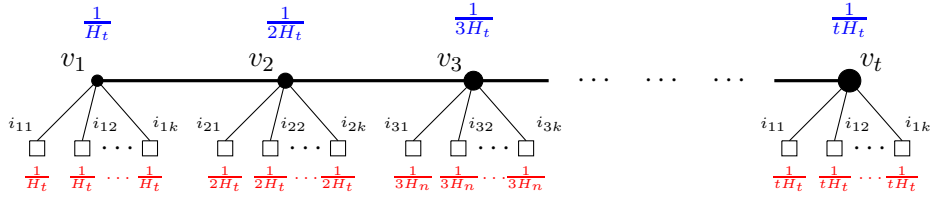
\end{document}